\newcommand{\PT}{$\mathcal{PT}$}
\newtheorem{theorem}{Theorem}
\newtheorem{corollary}{Corollary}
\newtheorem{remark}{Remark}
\title{Solutions of local and nonlocal equations reduced from the AKNS hierarchy}
\author{Kui Chen,~ Xiao Deng,  ~ Senyue Lou,~ Da-jun Zhang\footnote{Corresponding author. Email: djzhang@staff.shu.edu.cn}\\
{\small $^{1}$Department of Mathematics, Shanghai University, Shanghai 200444, P.R. China}\\
{\small $^{2}$Faculty of Science, Ningbo University, Ningbo 315211, P.R. China}}
\begin{document}

\maketitle

\begin{abstract}

  In the paper possible local and nonlocal reductions of the Ablowitz-Kaup-Newell-Suger (AKNS) hierarchy are collected,
  including the Korteweg-de Vries (KdV) hierarchy, modified KdV hierarchy  and their nonlocal versions,
  nonlinear Schr\"{o}dinger hierarchy  and their nonlocal versions,
  sine-Gordon equation in nonpotential form and its nonlocal forms.
  A reduction technique for solutions is employed,
  by which exact solutions in double Wronskian form are obtained for these reduced equations from
  those double Wronskian solutions of the AKNS hierarchy.
  As examples of dynamics we illustrate new interaction of two-soliton solutions of the reverse-$t$ nonlinear Schr\"{o}dinger equation.
  Although as a single soliton it is always stationary, two solitons travel along completely symmetric trajectories in $\{x,t\}$ plane
  and their amplitudes are affected by phase parameters.
  Asymptotic analysis is given as demonstration.
  The approach and relation described in this paper are systematic and general and can be used to other nonlocal equations.

\begin{description}
\item[PACS numbers:]
02.30.Ik, 02.30.Ks, 05.45.Yv
\item[Keywords:] reductions, nonlocal, AKNS hierarchy, double Wronskians, solutions
\end{description}
\end{abstract}

\section{Introduction}\label{sec-1}

Integrable systems play important roles in nonlinear mathematics and physics.
As a feature of integrability such systems posses their own exact multi-soliton solutions.
Developing techniques of finding exact solutions for nonlinear systems  is always a main topic in integrable systems.
Recently, an integrable system called nonlocal nonlinear Schr\"odinger (NLS) equation,
\begin{equation}
iq_{t}(x,t)+q_{xx}(x,t)\pm q^2(x,t)q^*(-x,t)=0,
\label{nx-NLS}
\end{equation}
was proposed by Ablowitz and Musslimani\cite{AM-PRL-2013}.
In this equation $q$ is a complex function of real arguments $(x,t)$,
$i$ is the imaginary unit and $*$ stands for complex conjugate.
Eq.\eqref{nx-NLS} is \PT-symmetric because the potential $V(x,t)=\pm q(x,t)q^*(-x,t)$
satisfies the \PT-symmetry condition $V(x,t) = V^*(-x,t)$.
To learn more about \PT-symmetry one can refer to review \cite{KYZ-RMP-2017}.
Eq.\eqref{nx-NLS} is also nonlocal as $x$ and $-x$ are involved simultaneously.
Similar type system, a nonlocal KdV model, is used to describe interaction of a two-place event \cite{lou-2017}.

Following \cite{AM-PRL-2013}, more integrable nonlocal systems were proposed \cite{SMM-PRE-2014,AM-PRE-2014,AM-Nonl-2016,ablowitz-study-2016,fokas-2016}.
Meanwhile, solving techniques are developed to deal with nonlocalness resulting from $-x$.
Among the main techniques are the Inverse Scattering Transform \cite{AM-PRL-2013,AM-Nonl-2016,zuo-JMAA-2017},
Darboux transformations  \cite{LiX-PRE-2015,Chen-JAMP-2015,HuaL-EPJP-2016,LiXM-JPSJ-2016,XU-AML-2017,zuo-jpsj-2017,ZhouZX-1,ZhouZX-2},
and the bilinear method\cite{Yan-AML-2015,XU-AML-2016,MaZ-AML-2016}.
Actually, it is hard to apply bilinearisation directly to Eq.\eqref{nx-NLS} itself, because it is difficult
to deal with the nonlocal element $q^*(-x,t)$ in bilinear approach.
Eq.\eqref{nx-NLS} is reduced from the second order Ablowitz-Kaup-Newell-Suger (AKNS)  coupled system
\begin{subequations}\label{brc}
\begin{align}
& iq_t=q_{xx}-2q^2r,\\
& ir_t=-r_{xx} +2r^2q
\end{align}
\end{subequations}
via the reduction of $r(x,t)=\mp q^*(-x,t)$.
The before-reduction system \eqref{brc} was already solved through bilinear  approach and solutions
were presented in terms of double Wronskians \cite{chen2008}.
Recently we developed a reduction technique
by imposing a constraint on the two basic vectors in double Wronskians
so that $q$ and $r$ obey the relation $r(x,t)=\mp q^*(-x,t)$
and therefore solutions to the nonlocal equation \eqref{nx-NLS} are obtained \cite{ChenZ-AML-2017}.
Similar approach is also available  to semidiscrete case \cite{deng-2017}.

There is a bilinear form for the whole isospectral AKNS hierarchy and solutions  are presented in terms of double
Wronskians \cite{liu1990}.
In this paper, starting from the double Wronskian solutions of the AKNS hierarchy,
we will consider suitable reductions on these solutions. As a result,
we will obtain solutions of the reduced equations, including the Korteweg-de Vries (KdV) hierarchy,
the NLS hierarchy, the nonlocal  NLS hierarchy with reverse space, reverse time and reverse space-time,
the mKdV hierarchy, the complex mKdV (cmKdV) hierarchy and their nonlocal counterparts, and the sine-Gordon equation (in nonpotential form) and its nonlocal version.
As an example of dynamics of obtained solutions,
we will illustrate  two-soliton interactions of reverse-$t$ nonlinear Schr\"{o}dinger equation.
Although as a single soliton it is always stationary, trajectories of two solitons are  completely symmetric in $\{x,t\}$ plane
and their amplitudes are affected by phase parameters.
Asymptotic analysis is given as demonstration.
We also illustrate solutions of reverse-$(x,t)$ sine-Gordon equation. They exhibit dynamics similar to the usual
mKdV equation.

The paper is organized as follows.
In Sec.\ref{sec-2} we recall local and nonlocal reductions of the AKNS hierarchy, and bilinear form and double-Wronskian solutions
of the AKNS hierarchy.
In Sec.\ref{section-reduction-of-solution} we derive the solutions for the systems reduced from the AKNS hierarchy
by considering suitable constraints on the elements in double Wronskians.
Then, in Sec.\ref{sec-4} we consider reductions of the first member in the negative AKNS
hierarchy to get sine-Gordon equation (in its nonpotential form).
In particular, we introduce a suitable integration operator to implement nonlocal reductions
and obtain nonlocal sine-Gordon equation.
Finally, Sec.\ref{sec-5} includes conclusions and discussions.

\section{The AKNS hierarchy and double Wronskian solutions}\label{sec-2}

In this section we list out some known results on local and nonlocal reductions\cite{AKNS-PRL-1973,Abl1974,ablowitz-study-2016}
of the isospectral AKNS hierarchy.
Double Wronskian solutions of the AKNS hierarchy \cite{liu1990,yin2008} will be also presented.

\subsection{The AKNS hierarchy and reductions}\label{sec-2-1}

It is well known that from the AKNS spectral problem\cite{AKNS-PRL-1973,Abl1974}
\begin{equation}\label{akns-spectral}
 \Phi_x =\left( \begin{array}{cc}
                                  \lambda & q \\
                                  r & -\lambda
                                \end{array}
                              \right)  \Phi,~~\Phi=\left(
                                \begin{array}{c}
                                  \phi_1 \\
                                  \phi_2
                                \end{array}
                              \right)
\end{equation}
with spectral parameter $\lambda$ and potentials $q=q(x,t)$ and $r=r(x,t)$,
one can derive the isospectral AKNS hierarchy
\begin{equation}\label{akns-hierarchy}
 \left(
   \begin{array}{c}
     q_{t_n} \\
     r_{t_n} \\
   \end{array}
 \right)      =K_n     = \left(
                      \begin{array}{c}
                        K_{1,n} \\
                        K_{2,n}
                      \end{array}
                    \right)
                    =L^n \left(
                      \begin{array}{c}
                        -q \\
                        r
                      \end{array}
                    \right),
                    ~~n=1,2,\cdots,
\end{equation}
where $L$ is the recursion operator denoted by
\begin{equation*}
 L= \left(
            \begin{array}{cc}
              -\partial_x+2q \partial_x^{-1}r & 2q \partial_x^{-1}q \\
              -2r \partial_x^{-1}r & \partial_x- 2r \partial_x^{-1}q
            \end{array}
          \right),
\end{equation*}
in which $\partial_x=\frac{\partial}{\partial x}$, $\partial_x\partial^{-1}_x=\partial^{-1}_x\partial_x=1$.
In the following we list possible one-component equations reduced from \eqref{akns-hierarchy}  that will be considered in this paper.

\noindent
\textbf{KdV hierarchy}:
\begin{equation}
r_{t_{2l+1}}=K_{2l+1}^{KdV}=K_{2,2l+1}|_{\eqref{kdv-hie-reduction}},~~ l=0,1,\cdots,
\label{kdv-hie}
\end{equation}
reduced from those odd-indexed members of the AKNS hierarchy \eqref{akns-hierarchy} with  reduction
\begin{equation}\label{kdv-hie-reduction}
  q(x,t)= \pm 1,
\end{equation}
in which the representative is the KdV equation
\begin{equation}
r_{t_3}=K_{3}^{KdV}=r_{xxx} \mp 6rr_x.
\label{kdv-eq}
\end{equation}
Here, the sign $\pm$ does not make sense in equation as it is easy balanced by $r \to \pm r$.
However, we will keep that for later convenience in the expression of solutions.

\noindent
\textbf{Local and nonlocal NLS (I) hierarchy}:
\begin{equation}
i q_{t_{2l}}=  K^{NLS(I)}_{2l}=  -K_{1,2l}|_{\eqref{nls-hie-reduction1}},~~ l=1,2,\cdots,
\label{nls-hie1}
\end{equation}
reduced from the even-indexed members of the AKNS hierarchy \eqref{akns-hierarchy} with the reduction
\begin{equation}\label{nls-hie-reduction1}
 r(x,t)= \delta q^*(\sigma x,t),~~\delta,\sigma=\pm 1,~~t\to i t,~~x,t \in \mathbb{R},
\end{equation}
where the representative is
\begin{equation}
iq_{t_2}=  K^{NLS(I)}_{2}= q_{xx}- 2\delta q^2 q^*(\sigma x,t).
\label{nls-eq1}
\end{equation}

\noindent
\textbf{Reverse-$t$ NLS(II) hierarchy}:
\begin{equation}
 q_{t_{2l}}=K^{NLS(II)}_{2l}=K_{1,2l}|_{\eqref{nls-hie-reduction2}},~~ l=1,2,\cdots,
\label{nls-hie2}
\end{equation}
reduced from the even-indexed members of the AKNS hierarchy \eqref{akns-hierarchy} with the reduction
\begin{equation}\label{nls-hie-reduction2}
 r(x,t)= \delta q(\sigma x,-t),~~\delta,\sigma=\pm 1,
\end{equation}
where the representative is
\begin{equation}
q_{t_2}=K^{NLS(II)}_{2}=-q_{xx}+ 2\delta q^2 q(\sigma x,-t).
\label{nls-eq2}
\end{equation}

\noindent
\textbf{Local and nonlocal mKdV hierarchy}:
\begin{equation}
q_{t_{2l+1}}=K^{mKdV}_{2l+1}=K_{1,2l+1}|_{\eqref{mkdv-hie-red}},~~ l=0, 1, 2\cdots,
\label{mkdv-hie}
\end{equation}
reduced from the odd-indexed members of the AKNS hierarchy \eqref{akns-hierarchy} with the reduction
\begin{equation}\label{mkdv-hie-red}
 r(x,t)= \delta q(\sigma_1 x,\sigma_2 t),~~\delta,\sigma_i=\pm 1,~~\sigma_1\sigma_2=1
\end{equation}
where the representative is
\begin{equation}
q_{t_3}=K^{mKdV}_{3}=q_{xxx}-6 \delta q q_x q(\sigma_1 x,\sigma_2 t).
\label{mkdv-eq}
\end{equation}

\noindent
\textbf{Local and nonlocal cmKdV hierarchy}:
\begin{equation}
q_{t_{2l+1}}=K^{cmKdV}_{2l+1}=K_{1,2l+1}|_{\eqref{cmkdv-hie-red}},~~ l=0, 1, 2, \cdots,
\label{cmkdv-hie}
\end{equation}
reduced from the odd-indexed members of the AKNS hierarchy \eqref{akns-hierarchy} with the reduction
\begin{equation}\label{cmkdv-hie-red}
 r(x,t)= \delta q^*(\sigma_1 x,\sigma_2 t),~~\delta,\sigma_i=\pm 1,~~\sigma_1\sigma_2=1,~~x,t\in \mathbb{R},
\end{equation}
where the representative is
\begin{equation}
q_{t_3}=K^{cmKdV}_{3}=q_{xxx}-6 \delta q q_x q^*(\sigma_1 x,\sigma_2 t).
\label{cmkdv-eq}
\end{equation}

\subsection{Double Wronskian solution of the AKNS hierarchy}\label{sec-2-2}

The AKNS hierarchy \eqref{akns-hierarchy} is equivalent to the following recursive form\cite{liu1990,newell1985}
\begin{equation}\label{akns-hierarchy-2}
 \left(
   \begin{array}{c}
     q_{t_{n+1}} \\
     r_{t_{n+1}} \\
   \end{array}
 \right)= L \left(
                      \begin{array}{c}
                        q_{t_{n}} \\
                        r_{t_{n}} \\
                      \end{array}
                    \right),~~n=1,2,\cdots
\end{equation}
with assumption $t_1=x$. When introducing  rational transformation
\begin{equation}\label{akns-transformation}
 q=\frac{h}{f},~~r=-\frac{g}{f},
\end{equation}
Eq.\eqref{akns-hierarchy-2} can be written as  bilinear form
\begin{subequations}\label{akns-bilibear}
\begin{eqnarray}
  &&(D_{t_{n+1}}-D_{x}D_{t_n})g\cdot f =0, \label{akns-bilibear-1}  \\
  &&(D_{t_{n+1}}-D_{x}D_{t_n})f\cdot h =0 , \label{akns-bilibear-2}  \\
  &&D^2_{x}f\cdot f =2gh,\label{akns-bilibear-3}
\end{eqnarray}
\end{subequations}
where $D$ is the Hirota derivative defined by\cite{Hir-PTP-1974}
\begin{equation*}
 D_x^m D_y^n f(x,y)\cdot g(x,y) = (\partial_x- \partial_{x'})^m(\partial_y- \partial_{y'})^n f(x,y)g(x',y')|_{x'=x,y'=y}.
\end{equation*}

System \eqref{akns-bilibear} admits  double Wronskian solution\cite{liu1990,yin2008}
\begin{subequations}\label{anks-solution-fgh}
 \begin{eqnarray}
 && f= W^{(n+1,m+1)}(\varphi,\psi)=|\widehat{\varphi}^{(n)}; \widehat{\psi}^{(m)}| , \label{1-solution-of-akns-f}  \\
 && g= 2^{n+1-m} W^{(n+2,m)}(\varphi,\psi)=2^{n+1-m}|\widehat{\varphi}^{(n+1)}; \widehat{\psi}^{(m-1)}| , \label{1-solution-of-akns-g}\\
 && h= 2^{m+1-n}W^{(n,m+2)}(\varphi,\psi)=2^{m+1-n}|\widehat{\varphi}^{(n-1)}; \widehat{\psi}^{(m+1)}|, \label{1-solution-of-akns-h}
 \end{eqnarray}
\end{subequations}
where $\varphi$ and $\psi$ are two column vectors
\begin{equation}
\varphi=(\varphi_1,\varphi_2,\cdots,\varphi_{n+m+2})^T,~~
\psi=(\psi_1,\psi_2,\cdots,\psi_{n+m+2})^T
\label{phipsi}
\end{equation}
$\widehat{\varphi}^{(n)}$ stands for consecutive columns $(\varphi, \partial_x\varphi, \partial^2_x\varphi,\cdots,\partial^{n}_x\varphi)$,
and $\varphi$ and $\psi$ are
defined by
 \begin{equation}\label{akns-var-psi-A-c-d}
  \varphi=\exp{\Bigl(\frac{1}{2}\sum^{\infty}_{j=1} A^j t_j\Bigr)}C^+, ~~\psi=\exp{\Bigl( -\frac{1}{2}\sum^{\infty}_{j=1} A^j t_j\Bigr)}C^-,
 \end{equation}
with $A=(k_{ij})_{(n+m+2)\times (n+m+2)}$ being an arbitrary constant matrix in $\mathbb{C}_{(n+m+2)\times (n+m+2)}$,
and
\[C^{\pm}=(c_1^{\pm}, c_2^{\pm},\cdots, c_{n+m+2}^{\pm})^T,~~ c^{\pm}_i\in \mathbb{C}.\]

Three comments we have on the above formulation.
Firstly, it can be prove that $A$ and any of its similar matrix lead to same $q$ and $r$.
Secondly, for the odd-indexed members in the hierarchy \eqref{akns-hierarchy},
\begin{equation}
u_{t_{2l+1}}=K_{2l+1},~~ u=(q,r)^T,~~ l=0,1,2,\cdots,
\label{akns-hie-odd}
\end{equation}
their  solutions are given by taking  ($t_1=x$)
\begin{equation}\label{akns-var-psi-A-c-d-odd}
\varphi=\exp{\Bigl(\frac{1}{2}\sum^{\infty}_{j=0} A^{2j+1} t_{2j+1}\Bigr)}C^+, ~~
\psi=\exp{\Bigl( -\frac{1}{2}\sum^{\infty}_{j=0} A^{2j+1} t_{2j+1}\Bigr)}C^-,
\end{equation}
i.e. the terms $ e^{\pm \frac{1}{2} A^{2j} t_{2j}}$ in \eqref{akns-var-psi-A-c-d} are absorbed into $C^{\pm}$.
Finally, for the even-indexed members in the hierarchy \eqref{akns-hierarchy},
\begin{equation}
u_{t_{2l}}=K_{2l},~~ u=(q,r)^T,~~ l=1,2,\cdots,
\label{akns-hie-even}
\end{equation}
their  solutions are given by taking
\begin{equation}\label{akns-var-psi-A-c-d-even}
\varphi=\exp{\Bigl(\frac{1}{2}Ax+\frac{1}{2}\sum^{\infty}_{j=1} A^{2j} t_{2j}\Bigr)}C^+, ~~
\psi=\exp{\Bigl( -\frac{1}{2}Ax-\frac{1}{2}\sum^{\infty}_{j=1} A^{2j} t_{2j}\Bigr)}C^-.
\end{equation}

\section{Reductions for Solutions}\label{section-reduction-of-solution}

In this section, we employ the technique developed in \cite{ChenZ-AML-2017} for the nonlocal NLS equations
to construct solutions for the hierarchies \eqref{kdv-hie}, \eqref{nls-hie1}, \eqref{nls-hie2}, \eqref{mkdv-hie} and \eqref{cmkdv-hie}.
We will design suitable constraints on the pair $(\varphi, \psi)$ in the double Wronskians so that
\eqref{akns-transformation} coincides with the reductions \eqref{kdv-hie-reduction},
\eqref{nls-hie-reduction1}, \eqref{nls-hie-reduction2}, \eqref{mkdv-hie-red} and \eqref{cmkdv-hie-red}, respectively.
Details will be given for the KdV hierarchy and NLS(II) hierarchy.
Results of other hierarchies will  be listed in Sec.\ref{sec-3-3}.

\subsection{Solution to the KdV hierarchy}
For the KdV hierarchy \eqref{kdv-hie}, its Wronskian solution can be derived from the one of the AKNS system \eqref{akns-hie-odd}.

\begin{theorem}\label{solution-kdvs}
The  KdV hierarchy \eqref{kdv-hie} admits  Wronskian solution
\begin{equation}\label{kdv-r-solu}
 r(x,t)= -\frac{4|\widehat{\varphi}^{(m+2)}; \widehat{\psi}^{(m-1)}|}{|\widehat{\varphi}^{(m+1)}; \widehat{\psi}^{(m)}|},
\end{equation}
where $\varphi$ and $\psi$ are $(2m+3)$-th order column vectors
defined by \eqref{akns-var-psi-A-c-d-odd} with $n=m+1$, and obey the constraint
\begin{equation}\label{kdv-psiTphi}
 \psi(x,t)=T \varphi(x,t),~~C^-=TC^+,
\end{equation}
in which $T\in \mathbb{C}_{(2m+3)\times (2m+3)}$ is a constant matrix satisfying
\begin{equation}\label{kdv-at-equation}
AT+TA=0 ,~ T^2=I,
\end{equation}
where $I$ is the $(2m+3)$-th order identity matrix.

\end{theorem}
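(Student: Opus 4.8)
The plan is to exploit the fact, recalled above, that for \emph{arbitrary} $A$ and $C^{\pm}$ the double Wronskian $(q,r)=(h/f,\,-g/f)$ given by \eqref{anks-solution-fgh}--\eqref{akns-var-psi-A-c-d-odd} already solves the odd-indexed AKNS hierarchy \eqref{akns-hie-odd}. The whole task therefore reduces to choosing the constraint on $(\varphi,\psi)$ so that the first potential collapses to the KdV reduction \eqref{kdv-hie-reduction}, i.e. $q=h/f=\pm1$. Once $q\equiv\pm1$ is established, the pair $(\pm1,r)$ is an AKNS solution whose second component $r=-g/f$ then automatically satisfies the reduced flow \eqref{kdv-hie}; setting $n=m+1$ in \eqref{1-solution-of-akns-f} and \eqref{1-solution-of-akns-g} turns $r=-g/f$ into the claimed formula \eqref{kdv-r-solu}, since the prefactor becomes $2^{\,n+1-m}=4$.

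First I would check that the algebraic constraint propagates, namely that $C^-=TC^+$ together with \eqref{kdv-at-equation} forces $\psi=T\varphi$ for all the times $t_{2j+1}$. Because only odd powers of $A$ enter \eqref{akns-var-psi-A-c-d-odd}, and because $AT+TA=0$ gives $A^{2j+1}T=-TA^{2j+1}$, the matrix $T$ can be pulled through the exponential with a sign flip,
\begin{equation*}
\psi=\exp\!\Bigl(-\tfrac12\textstyle\sum_j A^{2j+1}t_{2j+1}\Bigr)C^-
=\exp\!\Bigl(-\tfrac12\textstyle\sum_j A^{2j+1}t_{2j+1}\Bigr)T\,C^+
=T\exp\!\Bigl(\tfrac12\textstyle\sum_j A^{2j+1}t_{2j+1}\Bigr)C^+=T\varphi .
\end{equation*}
This is exactly the point where the \emph{odd}-index restriction and the anticommutation $AT+TA=0$ are both essential.

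The core step is the determinant identity $h=\pm f$. Since $T$ is constant, $\partial_x^{\,j}\psi=T\partial_x^{\,j}\varphi$, so with $n=m+1$ both Wronskians are built from the columns $\partial_x^{\,j}\varphi$ and $T\partial_x^{\,j}\varphi$: explicitly $f=|\widehat{\varphi}^{(m+1)};T\widehat{\varphi}^{(m)}|$ and $h=|\widehat{\varphi}^{(m)};T\widehat{\varphi}^{(m+1)}|$. I would multiply the matrix underlying $f$ on the left by $T$; using $T^2=I$ this sends $\partial_x^{\,j}\varphi\mapsto T\partial_x^{\,j}\varphi$ and $T\partial_x^{\,j}\varphi\mapsto\partial_x^{\,j}\varphi$, so that $\det(T)\,f=|T\widehat{\varphi}^{(m+1)};\widehat{\varphi}^{(m)}|$. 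Interchanging the block of $m+2$ columns with the block of $m+1$ columns costs the sign $(-1)^{(m+2)(m+1)}=+1$ (a product of consecutive integers is even), which turns the right-hand side into exactly $h$. Hence $h=\det(T)\,f$, and since $T^2=I$ forces $\det(T)=\pm1$, we get $q=h/f=\pm1$, which is precisely \eqref{kdv-hie-reduction}.

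The remaining verification is routine and I would only remark on it: with $q\equiv\pm1$ the first AKNS equation reads $0=q_{t_{2l+1}}=K_{1,2l+1}|_{q=\pm1}$, which holds automatically because $(q,r)$ is a genuine double Wronskian solution, so nothing beyond the reduction itself need be imposed. The only genuinely delicate points are the sign bookkeeping in the block column-swap and the careful use of anticommutation (ensuring no even powers of $A$ intrude), both handled above; the existence of matrices $A,T$ obeying \eqref{kdv-at-equation} is a separate constructive matter (e.g.\ taking $A$ with eigenvalues in $\pm$ pairs) and is not needed for the conditional statement of the theorem.
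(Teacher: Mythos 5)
Your proof is correct and follows essentially the same route as the paper: you establish the compatibility $\psi=T\varphi$ by pulling $T$ through the odd-power exponential via $AT+TA=0$, and then obtain $h=\det(T)\,f$ (hence $q=\pm1$) by multiplying the Wronskian matrix by $T$, using $T^{2}=I$, and noting that the block column swap costs $(-1)^{(m+1)(m+2)}=+1$. The only cosmetic differences are that you act with $T$ on the matrix underlying $f$ while the paper acts on $h$, and that you spell out the concluding reduction logic (including the prefactor $2^{n+1-m}=4$) a bit more explicitly than the paper does.
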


\begin{proof}

First, starting from \eqref{akns-var-psi-A-c-d-odd}, with assumption
\begin{equation}
AT+TA=0
\label{AT-kdv}
\end{equation}
and $C^-=TC^+$, one can verify that
\begin{align*}
\psi(x,t)=& \exp{\Bigl( -\frac{1}{2}\sum^{\infty}_{j=0} A^{2j+1} t_{2j+1}\Bigr)}C^-  \\
=&  \exp{\Bigl( -\frac{1}{2}\sum^{\infty}_{j=0} (-TAT^{-1})^{2j+1} t_{2j+1}\Bigr)}TC^+  \\
=& \exp{\Bigl( T(\frac{1}{2}\sum^{\infty}_{j=0} A^{2j+1} t_{2j+1})T^{-1}\Bigr)}TC^+  \\
=& T\exp{\Bigl(\frac{1}{2}\sum^{\infty}_{j=0} A^{2j+1} t_{2j+1})\Bigr)}C^+ \\
=& T \varphi(x,t),
\end{align*}
which means  the constraint \eqref{kdv-psiTphi} and assumption \eqref{AT-kdv} are compatible.

Next, we show that under \eqref{kdv-psiTphi} and \eqref{kdv-at-equation},
for $q$ and $r$ defined by \eqref{akns-transformation} with \eqref{anks-solution-fgh},
the reduction $q=\pm1$  holds.
In fact, in addition to \eqref{kdv-psiTphi} we take $n=m+1$ and  assume $T^2=I$, i.e. $|T|=\pm 1$. It then follows that
\begin{align*}
 h= & |\widehat{\varphi}^{(m)}; \widehat{\psi}^{(m+1)}|  = |\widehat{\varphi}^{(m)}; T\widehat{\varphi}^{(m+1)}|
          = |T||T\widehat{\varphi}^{(m)}; \widehat{\varphi}^{(m+1)}|    \\
= & |T||\widehat{\varphi}^{(m+1)};T\widehat{\varphi}^{(m)}| = \pm |\widehat{\varphi}^{(m+1)};T\widehat{\varphi}^{(m)}|= \pm f.
\end{align*}
As a consequence, $q=h/f=|T|=\pm 1$, which  coincides with the reduction \eqref{kdv-hie-reduction}.
Meanwhile, $r$ is expressed by \eqref{kdv-r-solu}.

\end{proof}

\begin{corollary}
As for solutions to \eqref{kdv-at-equation}, we only need to consider the case when $A$ is in its canonical form.
Then, solutions to \eqref{kdv-at-equation} are given below:
\begin{equation}\label{kdv-T-A-C}
 T= \left(
      \begin{array}{ccc}
        \mathbf{0}_{m+1} & 0_{m+1} & \mathbf{I}_{m+1} \\
        0_{m+1}^T & 1 & 0_{m+1}^T \\
        \mathbf{I}_{m+1} & 0_{m+1} & \mathbf{0}_{m+1} \\
      \end{array}
    \right),~~ A= \left(
                   \begin{array}{ccc}
                     \mathbf{K}_{m+1} & 0_{m+1} & \mathbf{0}_{m+1} \\
                     0_{m+1}^T & 0 & 0_{m+1}^T \\
                     \mathbf{0}_{m+1} & 0_{m+1} & -\mathbf{K}_{m+1} \\
                   \end{array}
                 \right),
\end{equation}
where $\mathbf{0}_{m+1}$ is a $(m+1)\times (m+1)$ matrix in which all elements are zeros,
$\mathbf{I}_{m+1}$ is the $(m+1)\times (m+1)$ identity matrix,
$\mathbf{K}_{m+1}\in \mathbb{C}_{(m+1)\times (m+1)}$,
$0_{m+1}$ is a $(m+1)$-th order column vector in which all elements are zeros.
For explicit forms of $\varphi$ and $\psi$, when
\[\mathbf{K}_{m+1}=\mathrm{Diag}(k_1,k_2,\cdots, k_{m+1}),\]
we have
 \begin{align}\label{kdv-varphi-psi}
 &\varphi= (c_1 e^{\xi(k_1)}, c_{2} e^{\xi(k_2)},\cdots,c_{m+1} e^{\xi(k_{m+1})},1,
 d_1 e^{\xi(-k_1)}, d_{2} e^{\xi(-k_2)},\cdots,d_{m+1} e^{\xi(-k_{m+1})})^T,
 \end{align}
where
\begin{equation}\label{kdv-scatter}
\xi(k_i)=\frac{1}{2}k_i x+\frac{1}{2}\sum^{\infty}_{j=1}k_i^{2j+1}t_{2j+1}.
\end{equation}
When  $\mathbf{K}_{m+1}$ is a $(m+1)\times (m+1)$ Jordan matrix w.r.t. 
\begin{equation}
J_{m+1}(k)=\left(
      \begin{array}{cccc}
        k & 0 & \cdots & 0  \\
        1& k& \cdots& 0 \\
        \vdots &\ddots&\ddots& \vdots\\
        0 & \cdots &1&k
      \end{array}\right)_{(m+1)\times (m+1)},
\label{Jordan}
\end{equation}
we have
 \begin{equation}\label{kdv-varphi-jord}
 \varphi= \left(c e^{\xi(k)}, \frac{\partial_{k}}{1!}(c e^{\xi(k)}),\cdots, \frac{\partial^m_{k}}{m!} (c e^{\xi(k)}),1,
 d e^{\xi(-k)}, \frac{\partial_{k}}{1!}(d e^{\xi(-k)}),\cdots,\frac{\partial^m_{k}}{m!}(d e^{\xi(-k)})\right)^T.
 \end{equation}
Since $q=h/f=|T|$ and $|T|=(-1)^{m+1}$ where $T$ is defined in \eqref{kdv-T-A-C},
to obtain solutions to the KdV hierarchy \eqref{kdv-hie} corresponding to reduction $q=1$, we can replace \eqref{kdv-r-solu} with
\begin{equation}\label{kdv-r-solu-T}
 r(x,t)= (-1)^m \frac{4|\widehat{\varphi}^{(m+2)}; \widehat{\psi}^{(m-1)}|}{|\widehat{\varphi}^{(m+1)}; \widehat{\psi}^{(m)}|}.
\end{equation}

\end{corollary}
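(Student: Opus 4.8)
The plan is to treat the corollary as three linked tasks, carried out in this order: (i) reduce the matrix system \eqref{kdv-at-equation} to the case of canonical $A$ and exhibit the block pair $(T,A)$; (ii) compute the column vectors $\varphi,\psi$ explicitly in both the diagonal and Jordan cases; and (iii) evaluate $\det T$ and use it to fix the overall sign in the solution formula. First I would invoke the remark recorded earlier in the excerpt that $A$ and any matrix similar to it produce the same $q,r$; this lets me assume $A$ is already in Jordan canonical form and then build $T$ adapted to it. The structural input comes from rewriting \eqref{kdv-at-equation}: since $T^2=I$, $T$ is an involution with $T^{-1}=T$, so $AT+TA=0$ is equivalent to $TAT=-A$, i.e. $A$ is similar to $-A$. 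Hence the spectrum of $A$ is symmetric under $k\mapsto -k$ with matching Jordan structure, and because the order $2m+3$ is odd the pairing forces a nontrivial, odd-dimensional zero-eigenspace; the representative choice is $m+1$ symmetric pairs together with a single zero mode. This is exactly $A=\mathrm{blockdiag}(\mathbf{K}_{m+1},0,-\mathbf{K}_{m+1})$ displayed in \eqref{kdv-T-A-C}, with the involution that swaps the two outer blocks and fixes the middle coordinate being the block-antidiagonal $T$ there. I would then verify by a one-line block multiplication that this pair satisfies both $AT+TA=0$ and $T^2=I$.

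Second, I would read off $\varphi$ from \eqref{akns-var-psi-A-c-d-odd}. For $\mathbf{K}_{m+1}=\mathrm{Diag}(k_1,\dots,k_{m+1})$ the exponential acts entrywise, producing $e^{\xi(k_i)}$ on the first block, $1$ on the middle zero mode, and $e^{\xi(-k_i)}$ on the third block (the eigenvalue $-k_i$ sending $\xi(k_i)$ to $\xi(-k_i)$), which is \eqref{kdv-varphi-psi}. For a single Jordan block $\mathbf{K}_{m+1}=J_{m+1}(k)$ I would use the standard fact that for analytic $F$ one has $F(J_{m+1}(k))=\sum_l \tfrac{F^{(l)}(k)}{l!}N^l$ with $N$ the nilpotent subdiagonal shift; taking $F(z)=e^{\xi(z)}$ and applying it to the seed vector $(c,0,\dots,0)^T$ yields the derivative chain $\tfrac{\partial_k^{\,l}}{l!}(c\,e^{\xi(k)})$, while the third block reproduces $\tfrac{\partial_k^{\,l}}{l!}(d\,e^{\xi(-k)})$ once the two sign factors coming from $(-J)^{2j+1}$ and from differentiating $e^{\xi(-k)}$ cancel. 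This gives \eqref{kdv-varphi-jord}, and $\psi=T\varphi$ by \eqref{kdv-psiTphi}.

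Finally, I would compute $\det T$. The matrix $T$ in \eqref{kdv-T-A-C} is a permutation matrix that fixes coordinate $m+2$ and transposes coordinate $i$ with $i+(m+2)$ for $i=1,\dots,m+1$; this is a product of $m+1$ disjoint transpositions, so $\det T=(-1)^{m+1}$. Since Theorem~\ref{solution-kdvs} gives $q=h/f=|T|$, the construction realizes $q=(-1)^{m+1}$ rather than $q=+1$. Because the reduction \eqref{kdv-hie-reduction} allows either sign and that sign is absorbed by $r\mapsto\pm r$ (as noted after \eqref{kdv-eq}), converting to the $q=1$ normalization amounts to multiplying $r$ by $(-1)^{m+1}$, which turns \eqref{kdv-r-solu} into \eqref{kdv-r-solu-T}.

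I expect the genuine obstacle to be step (i): the verification of $(T,A)$ is routine, but justifying that this block form is the canonical representative — reconciling the anticommutation with the forced odd-dimensional zero-eigenspace and organizing Jordan chains so that $T$ becomes the simple block swap — is where the real content lies. The Jordan-exponential identity in step (ii) is standard but demands careful sign bookkeeping, and the determinant in step (iii) is immediate once $T$ is recognized as a block-swap permutation.
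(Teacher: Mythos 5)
Your proposal is correct and follows essentially the same route the paper takes implicitly: exhibit the block pair $(T,A)$ in \eqref{kdv-T-A-C} and verify $AT+TA=0$, $T^2=I$ by block multiplication, evaluate the matrix exponential in \eqref{akns-var-psi-A-c-d-odd} entrywise (diagonal case) or via the Jordan-block formula (giving the derivative chains), and note that $T$ is a product of $m+1$ disjoint transpositions so $|T|=(-1)^{m+1}$, whence multiplying $r$ by $(-1)^{m+1}$ converts \eqref{kdv-r-solu} into \eqref{kdv-r-solu-T} for the $q=1$ normalization. Your extra step (i) argument --- that $T^2=I$ makes $TAT=-A$, forcing the spectrum of $A$ to be symmetric under $k\mapsto -k$ with an odd-multiplicity zero eigenvalue in dimension $2m+3$ --- is sound motivation that the paper omits, but it arrives at the same canonical form rather than constituting a different method.
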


As for solitons, \eqref{kdv-r-solu-T} with \eqref{kdv-varphi-psi}
gives $(m+1)$-soliton solution for the KdV hierarchy corresponding to reduction $q=1$.
The simplest one is 1-soliton solution ($m=0$),
\begin{equation*}
 r= -\frac{k^2}{2} \mathrm{sech}^2\Bigl(\frac{1}{2}k x +\frac{1}{2}k^3 t_3+\frac{1}{2}\xi^{(0)}\Bigr),~~ \xi^{(0)}=2\ln c \in \mathbb{R},
\end{equation*}
where we have taken $C^+=(c,1,-1/c)^T$.
Solutions corresponding to Jordan block \eqref{Jordan} are usually called multiple-pole solutions.

Note that \eqref{kdv-r-solu-T} provide a soliton solution to the KdV equation in double Wronskian form.
However, it is not clear how this form is related to the solutions in single Wronskian form (cf.\cite{Zha-2006}).

\subsection{Solutions of the NLS(II) hierarchy \eqref{nls-hie2}}

\subsubsection{Solutions}

For the NLS(II) hierarchy \eqref{nls-hie2}, we first present results and then prove them.

\begin{theorem}\label{solution-nls2s}

The  NLS(II) hierarchy \eqref{nls-hie2} has solution
\begin{equation}\label{nls2-q-solu}
 q(x)= \frac{2|\widehat{\varphi}^{(n-1)}; \widehat{\psi}^{(n+1)}|}{|\widehat{\varphi}^{(n)}; \widehat{\psi}^{(n)}|},
\end{equation}
where $\varphi$ and $\psi$ are $2(n+1)$-th order column vectors $($i.e. m=n in \eqref{anks-solution-fgh}$)$ defined by \eqref{akns-var-psi-A-c-d-even},
and obey the constraint
\begin{equation}\label{nls2-psiTphi}
 \psi(x,t)=T \varphi(\sigma x,-t),
\end{equation}
in which $T\in \mathbb{C}_{2(n+1)\times 2(n+1)}$ is a constant matrix satisfying
\begin{equation}\label{nls2-at-equation}
AT+\sigma TA=0 ,~~ T^2=\delta \sigma I,~C^-=TC^+.
\end{equation}
\end{theorem}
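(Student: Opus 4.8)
The plan is to mirror the structure of the proof of Theorem~\ref{solution-kdvs}, proceeding in two stages: first verifying that the constraint \eqref{nls2-psiTphi} is compatible with the evolution \eqref{akns-var-psi-A-c-d-even} under the algebraic conditions \eqref{nls2-at-equation}, and then showing that the reduction $r(x,t)=\delta q(\sigma x,-t)$ in \eqref{nls-hie-reduction2} holds for $q,r$ built from the double Wronskians. For the first stage I would start from $\psi(x,t)=\exp\bigl(-\tfrac12 A x-\tfrac12\sum_{j\ge1}A^{2j}t_{2j}\bigr)C^-$ and substitute $C^-=TC^+$. The key algebraic input is $AT+\sigma TA=0$, i.e. $A=-\sigma TAT^{-1}$, so that $A^{2j}=T A^{2j}T^{-1}$ for all $j$ (the even powers kill the sign $\sigma$, regardless of whether $\sigma=\pm1$), while the linear term in $x$ carries a single factor of $A$ and hence picks up the sign $-\sigma$. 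The aim is to push $T$ to the left through the exponential and simultaneously send $x\mapsto\sigma x$ and $t\mapsto -t$, producing
\begin{equation*}
\psi(x,t)=T\exp\Bigl(\tfrac12 A(\sigma x)+\tfrac12\sum_{j\ge1}A^{2j}(-t_{2j})\Bigr)C^+=T\varphi(\sigma x,-t),
\end{equation*}
which is exactly \eqref{nls2-psiTphi}. I expect the bookkeeping of the two signs $\sigma$ and $\delta$ to be the place where care is needed: $\sigma$ governs the spatial reflection through the linear $x$-term, while $\delta$ enters only through $T^2=\delta\sigma I$ and will surface in the second stage.

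For the second stage the goal is to relate $g$ and $h$ from \eqref{1-solution-of-akns-g}--\eqref{1-solution-of-akns-h} under the involution and thereby obtain $r(x,t)=\delta q(\sigma x,-t)$. Setting $m=n$, I would evaluate $q(\sigma x,-t)$ and $r(x,t)$ as ratios of double Wronskians and use the constraint $\psi(x,t)=T\varphi(\sigma x,-t)$, equivalently $\varphi(\sigma x,-t)=T^{-1}\psi(x,t)=\delta\sigma\,T\psi(x,t)$ via $T^{2}=\delta\sigma I$. The plan is to substitute these relations into the Wronskian blocks, extract the scalar factor $|T|$ from each column operation, and use the column-swap symmetry $|\widehat{\varphi}^{(a)};\widehat{\psi}^{(b)}| \leftrightarrow |\widehat{\psi}^{(b)};\widehat{\varphi}^{(a)}|$ (up to a sign from reordering the $2(n+1)$ columns) exactly as in the KdV computation where $h=\pm f$ was derived. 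Because differentiation with respect to $x$ anticommutes with the reflection $x\mapsto\sigma x$ only up to the sign $\sigma$, I anticipate that each $\partial_x$ acting on a reflected column contributes a factor $\sigma$, and tracking the total number of such derivatives across the blocks is what ultimately forces the condition $T^{2}=\delta\sigma I$ rather than merely $T^2=\pm I$.

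The hard part will be the second stage: correctly combining three independent sources of signs — the determinant $|T|$, the parity of the permutation that swaps the two Wronskian blocks, and the $\sigma$-factors from the chain rule on $\partial_x^{k}\varphi(\sigma x,-t)$ — so that they conspire to yield precisely the coefficient $\delta$ in $r=\delta q(\sigma x,-t)$ and not some other unit. I would organize this by writing $r(x,t)=-g/f$ and $q(\sigma x,-t)=h(\sigma x,-t)/f(\sigma x,-t)$ explicitly, then showing $f(\sigma x,-t)=\kappa f(x,t)$ and $h(\sigma x,-t)=\kappa\,(-\delta)\,g(x,t)$ for a common nonzero scalar $\kappa$ that cancels in the ratio; verifying that the two $\kappa$'s agree is the crux. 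Once the sign reconciliation is settled, the claimed solution formula \eqref{nls2-q-solu} with $m=n$ follows immediately from \eqref{1-solution-of-akns-h} together with the expression for $f$ in \eqref{1-solution-of-akns-f}, completing the proof.
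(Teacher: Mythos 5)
Your proposal is correct and follows essentially the same route as the paper's proof: the same compatibility computation pushing $T$ through the exponential using $A=-\sigma TAT^{-1}$, followed by the same Wronskian sign-tracking argument, and your target relations $f(\sigma x,-t)=\kappa f(x,t)$, $h(\sigma x,-t)=-\delta\kappa\,g(x,t)$ are exactly what the paper establishes, with $\kappa=(-1)^{(n+1)^2}(\delta\sigma)^{n+1}|T|$. The only cosmetic difference is that the paper carries the two prefactors separately and cancels them in the ratio $\delta h(\sigma x,-t)/f(\sigma x,-t)=-g(x,t)/f(x,t)$ rather than isolating a common $\kappa$ first.
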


\begin{proof}

Similar to the KdV case, we assume
\begin{equation}\label{nls2-at-equation-assu}
AT+\sigma TA=0 , ~C^-=TC^+.
\end{equation}
Then we have
\begin{subequations}
\begin{eqnarray}
&&\psi(x,t)= \exp{\Bigl( -\frac{1}{2}Ax-\frac{1}{2}\sum^{\infty}_{j=1} A^{2j} t_{2j}\Bigr)}C^-  \nonumber  \\
&&~~~~~~~~~=  \exp{\Bigl( -\frac{1}{2}(-\sigma T AT^{-1})x-\frac{1}{2}\sum^{\infty}_{j=1} (-\sigma T AT^{-1})^{2j} t_{2j}\Bigr)}TC^+  \nonumber  \\
&&~~~~~~~~~=  \exp{\Bigl( \frac{1}{2}( T AT^{-1})\sigma x + \frac{1}{2}\sum^{\infty}_{j=1} ( T AT^{-1})^{2j}(- t_{2j})\Bigr)}TC^+  \nonumber  \\
&&~~~~~~~~~=  T\exp{\Bigl( \frac{1}{2} A \sigma x + \frac{1}{2} \sum^{\infty}_{j=1} A^{2j}(- t_{2j})\Bigr )}C^+  \nonumber  \\
&&~~~~~~~~~= T \varphi(\sigma x,-t), \nonumber
\end{eqnarray}
\end{subequations}
which means  the constraint \eqref{nls2-psiTphi} and assumption \eqref{nls2-at-equation-assu} are compatible.
For convenience we introduce notation
\begin{equation*}
 \widehat{\varphi}^{(s)}(ax)_{[bx]}=(\varphi(ax),\partial_{bx}\varphi(ax),\cdots, \partial^{s}_{bx}\varphi(ax))_{2(n+1)\times 2(n+1)},
\end{equation*}
where $\varphi(x)=(\varphi_1(x),\varphi_2(x),\cdots,\varphi_{2(n+1)}(x))^T$ is a $2(n+1)$-th order column vector.
Under the constraint \eqref{nls2-psiTphi}, the double Wronskians $f,g,h$ in \eqref{anks-solution-fgh} are rewritten as
\begin{subequations}\label{nls2-fgh}
\begin{eqnarray}
&&f(x,t)= |\widehat{\varphi}^{(n)}; \widehat{\psi}^{(n)}|= |\widehat{\varphi}^{(n)}(x,t)_{[x]};
T\widehat{\varphi}^{(n)}(\sigma x,-t)_{[x]}|,  \label{nls2-fgh-f}  \\
&&g(x,t)= 2|\widehat{\varphi}^{(n+1)}; \widehat{\psi}^{(n-1)}|=  2|\widehat{\varphi}^{(n+1)}(x,t)_{[x]};
T\widehat{\varphi}^{(n-1)}(\sigma x,-t)_{[x]}|,  \label{nls2-fgh-g}  \\
&&h(x,t)= 2|\widehat{\varphi}^{(n-1)}; \widehat{\psi}^{(n+1)}|=2 |\widehat{\varphi}^{(n-1)}(x,t)_{[x]};
T\widehat{\varphi}^{(n+1)}(\sigma x,-t)_{[x]}|. \label{nls2-fgh-h}
\end{eqnarray}
\end{subequations}
With a further assumption $T^2=\delta \sigma I$,
we have
\begin{subequations}
\begin{eqnarray}
&&f(\sigma x,-t)= |\widehat{\varphi}^{(n)}(\sigma x,-t)_{[\sigma x]}; T\widehat{\varphi}^{(n)}(\sigma^2 x,t)_{[\sigma x]}|   \nonumber  \\
&&~~~~~~~~~~~~ = (-1)^{(n+1)^2}|T| | \widehat{\varphi}^{(n)}(x,t)_{[x]};T^{-1} \widehat{\varphi}^{(n)}(\sigma x,-t)_{[x]}|   \nonumber  \\
&&~~~~~~~~~~~~ = (-1)^{(n+1)^2}(\delta \sigma )^{n+1}|T| | \widehat{\varphi}^{(n)}(x,t)_{[x]};T \widehat{\varphi}^{(n)}(\sigma x,-t)_{[x]}|   \nonumber  \\
&&~~~~~~~~~~~~ = (-1)^{(n+1)^2}(\delta \sigma )^{n+1}|T| f(x,t) \nonumber
\end{eqnarray}
\end{subequations}
and
\begin{subequations}
\begin{eqnarray}
&&h(\sigma x,-t)=2 |\widehat{\varphi}^{(n-1)}(\sigma x,-t)_{[\sigma x]}; T\widehat{\varphi}^{(n+1)}(\sigma^2 x,t)_{[\sigma x]}|   \nonumber  \\
&&~~~~~~~~~~~~ = 2(-1)^{n(n+2)}\sigma^{2n+1}|T| | \widehat{\varphi}^{(n+1)}(x,t)_{[x]};T^{-1} \widehat{\varphi}^{(n-1)}(\sigma x,-t)_{[x]}|   \nonumber  \\
&&~~~~~~~~~~~~ =2(-1)^{n(n+2)}\sigma^{2n+1}(\delta \sigma)^{n}|T||\widehat{\varphi}^{(n+1)}(x,t)_{[x]};T\widehat{\varphi}^{(n-1)}(\sigma x,-t)_{[x]}| \nonumber\\
&&~~~~~~~~~~~~ = (-1)^{n(n+2)}\sigma(\delta \sigma)^{n}|T| g(x,t), \nonumber
\end{eqnarray}
\end{subequations}
which immediately yields relation
\begin{equation*}
\delta q(\sigma x,-t)= \delta\frac{h(\sigma x,-t)}{f(\sigma x,-t)}
= \delta\frac{(-1)^{n(n+2)}\sigma(\delta \sigma)^{n}|T| g(x,t)}{(-1)^{(n+1)^2}(\delta \sigma )^{n+1}|T| f(x,t)}=-\frac{g(x,t)}{f(x,t)}=r(x,t),
\end{equation*}
which coincides with the reduction \eqref{nls-hie-reduction2} for the NLS(II) hierarchy \eqref{nls-hie2}.
Thus we finish the proof.

\end{proof}

Solutions to the condition equation \eqref{nls2-at-equation} with respect to $(\sigma, \delta)$ are given below.
Suppose $T$ and $A$ are block matrices
\begin{equation}\label{nls2-example-tac}
 T=\left(
     \begin{array}{cc}
       T_1 & T_2 \\
       T_3 & T_4 \\
     \end{array}
   \right),~~A=\left(
                 \begin{array}{cc}
                   K_1 & \mathbf{0} \\
                   \mathbf{0} & K_4 \\
                 \end{array}
               \right),
\end{equation}
where $T_i$ and $A_i$ are $(n+1)\times (n+1)$ matrices.
Then, solutions to \eqref{nls2-at-equation} are

\begin{table}[H]
\begin{center}
\begin{tabular}{|c|c|c|}
\hline
   $(\sigma, \delta)$    & $T$ &  $A$    \\
\hline
   $(1,-1)$              & $T_1=T_4=\mathbf{0}_{n+1}$, $T_3=-T_2 =\mathbf{I}_{n+1}$ & $ K_1=-K_4=\mathbf{K}_{n+1}$ \\
\hline
   $(1,1)$              & $T_1=T_4=\mathbf{0}_{n+1}$, $T_3=T_2 =\mathbf{I}_{n+1}$ & $ K_1=-K_4=\mathbf{K}_{n+1}$ \\
\hline
   $(-1,-1)$              & $T_1=-T_4=\mathbf{I}_{n+1}$, $T_3=T_2 =\mathbf{0}_{n+1}$ & $ K_1=-K_4=\mathbf{K}_{n+1}$ \\
\hline
   $(-1,1)$              & $T_1=-T_4=i \mathbf{I}_{n+1}$, $T_3=T_2 =\mathbf{0}_{n+1}$ & $ K_1=-K_4=\mathbf{K}_{n+1}$ \\
\hline
\end{tabular}
\caption{$T$ and $A$ for the NLS(II) hierarchy}
\label{Tab-1}
\end{center}
\end{table}

In case of $\mathbf{K}_{n+1}$ being a diagonal matrix
\begin{equation}
\mathbf{K}_{n+1}=\mathrm{Diag}(k_1,k_2,\cdots, k_{n+1}),
\label{K-n}
\end{equation}
we have
\begin{equation}\label{nls-II-varphi-psi}
\varphi= (c_1 e^{\zeta(k_1)}, c_2 e^{\zeta(k_2)},\cdots,c_{n+1} e^{\zeta(k_{n+1})},
 d_1 e^{\zeta(-k_1)},  d_2 e^{\zeta(-k_2)},\cdots, d_{n+1} e^{\zeta(-k_{n+1})})^T,
\end{equation}
where
\begin{equation}\label{nle-II-scatter}
\zeta(k_i)=\frac{1}{2}k_i x+\frac{1}{2}\sum^{\infty}_{j=1}k_i^{2j}t_{2j}.
\end{equation}
When  $\mathbf{K}_{n+1}$ is a $(n+1)\times (n+1)$ Jordan matrix w.r.t. $J_{n+1}(k)$ defined as in \eqref{Jordan},
we have
 \begin{equation}\label{nls-I-varphi-jord}
 \varphi= \left(c e^{\zeta(k)}, \frac{\partial_{k}}{1!}(c e^{\zeta(k)}),\cdots, \frac{\partial^n_{k}}{n!} (c e^{\zeta(k)}),
 d e^{\zeta(-k)}, \frac{\partial_{k}}{1!}(d e^{\zeta(-k)}),\cdots,\frac{\partial^n_{k}}{n!}(d e^{\zeta(-k)})\right)^T.
 \end{equation}

For  equation \eqref{nls-eq2} with different $(\sigma,\delta)$, their one-soliton solutions\footnote{Here we still call them
soliton solutions regardless their traveling behaviors.} (1SS) are respectively
\begin{subequations}\label{iss-nls-2}
\begin{align}
&  q_{\sigma=1, \delta=-1} =-\frac{2k c d\,e^{-k^2 t}}{c^2 e^{kx}+ d^2 e^{-kx}}, \label{1ss-nls-2(1-1)} \\
&  q_{\sigma=1, \delta=1} =\frac{2k c d\,e^{-k^2 t}}{c^2 e^{kx}- d^2 e^{-kx}},\label{1ss-6}\\
&  q_{\sigma=-1, \delta=1} =\frac{2ik e^{-k^2 t}}{e^{kx}+e^{-kx}},\label{1ss-7}\\
& q_{\sigma=-1, \delta=-1} =\frac{2k e^{-k^2 t}}{e^{kx}+e^{-kx}},\label{1ss-nls-2(1-2)}
\end{align}
\end{subequations}
where $k=k_1, t=t_2$ and $C^+=(c,d)^T$.

\begin{remark}\label{rem-1}
By observation of equations in the hierarchy \eqref{nls-hie2} one can find the following.
On the level of equations,
\eqref{nls-hie2} with $(\sigma, \delta)=(\pm1,1)$ and \eqref{nls-hie2} with $(\sigma, \delta)=(\pm1,-1)$ can be transformed from each other
by simply taking $q\to iq$.
Such a transformation can be also seen on the level of solutions that we present through \eqref{nls2-q-solu} with choices listed in Table \ref{Tab-1}.
First, for $q_{(\sigma, \delta)=(1,-1)}$, due to arbitrariness  of vector $C^{+}$,
if we replace $C^{+}$ by $\mathrm{Diag}(\mathbf{I}_{n+1}, i \mathbf{I}_{n+1})C^{+}$,
then, $q_{(\sigma, \delta)=(1,-1)}$ defined by \eqref{nls2-q-solu}  will be changed to $iq_{(\sigma, \delta)=(1,-1)}$,
which gives a solution to \eqref{nls-hie2} with $(\sigma, \delta)=(1,1)$.
Second, for $q$ given in the form \eqref{nls2-q-solu}, from the line $(\sigma, \delta)=(-1,-1)$ and line $(\sigma, \delta)=(-1,1)$ in Table \ref{Tab-1},
one can immediately find the relation $q_{(\sigma, \delta)=(-1,-1)}=-iq_{(\sigma, \delta)=(-1,1)}$.
One-soliton solutions listed in \eqref{iss-nls-2}  also demonstrate the above connections as examples.
In conclusion, with these connections we only need to consider the cases $(\sigma, \delta)=(1,-1)$ and $(\sigma, \delta)=(-1,-1)$.
\end{remark}

\begin{remark}\label{rem-2}
Due to the special block structure of the matrix $T$ for the cases of $(\sigma,\delta)=(-1,\pm 1)$ in Table \ref{Tab-1},
it is easy to find that both $C^{\pm}$ can be gauged to be $(1,1,\cdots, 1)^T$.
This means the solutions we obtain for the cases $(\sigma,\delta)=(-1,\pm 1)$
are independent of phase parameters $C^{\pm}$, i.e. the initial phase has always to be $0$.
\end{remark}

\subsubsection{Dynamics of  $q_{(\sigma, \delta)=(1,-1)}$}\label{sec-3-2-2}

In the following let us investigate dynamics of $q_{(\sigma, \delta)=(1,-1)}$ governed by equation
\begin{equation}
q_t(x,t)+q_{xx}(x,t) + 2 q^2(x,t)q(x,-t)=0.
\label{nls-2-eq}
\end{equation}
The 1SS of this equation is given by \eqref{1ss-nls-2(1-1)}, which is rewritten as
\begin{equation}
q(x,t) =-k e^{-k^2 t} \, \mathrm{sech} \Bigl( kx+\ln \frac{c}{d}\Bigr). \label{1ss-nls-2(1-1)-r}
\end{equation}
This is a stationary wave with an initial phase $\ln\frac{c}{d}$ and an amplitude that exponentially decreases with time,
as depicted in Fig.\ref{fig-1-1}
It is interesting that $q(x,t)q(x,-t)=k^2\mathrm{sech}^2 \bigl( kx+\ln \frac{c}{d}\bigr)$
provides a conserved density for equation \eqref{nls-2-eq}.
The conservation property is illustrated in Fig.\ref{fig-1-2}.

\begin{figure}[!h]
\centering \subfigure[]
{\label{fig-1-1} 
\includegraphics[width=2.5in]{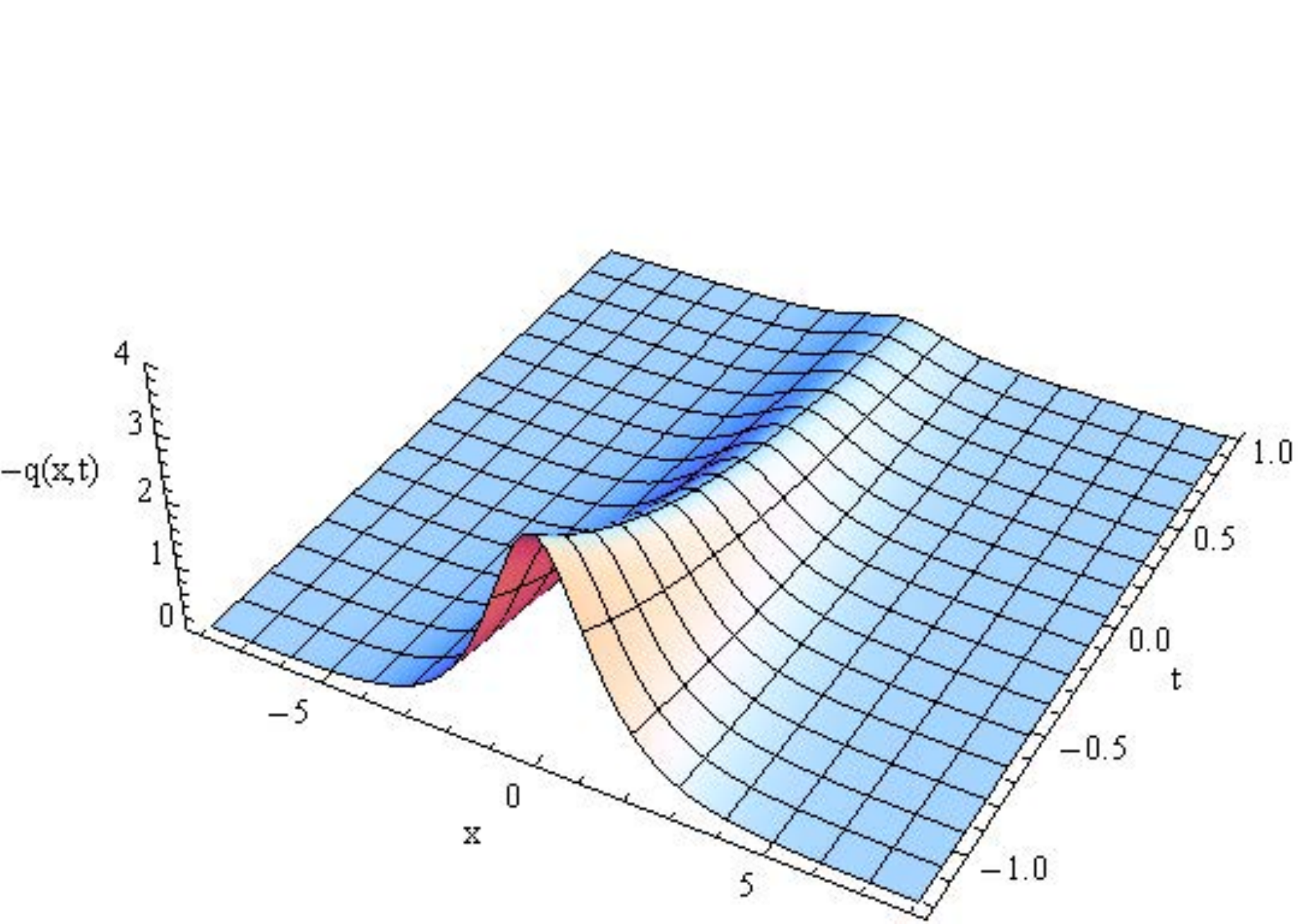}
}~~~ \subfigure[]{
\label{fig-1-2} 
\includegraphics[width=2.5in]{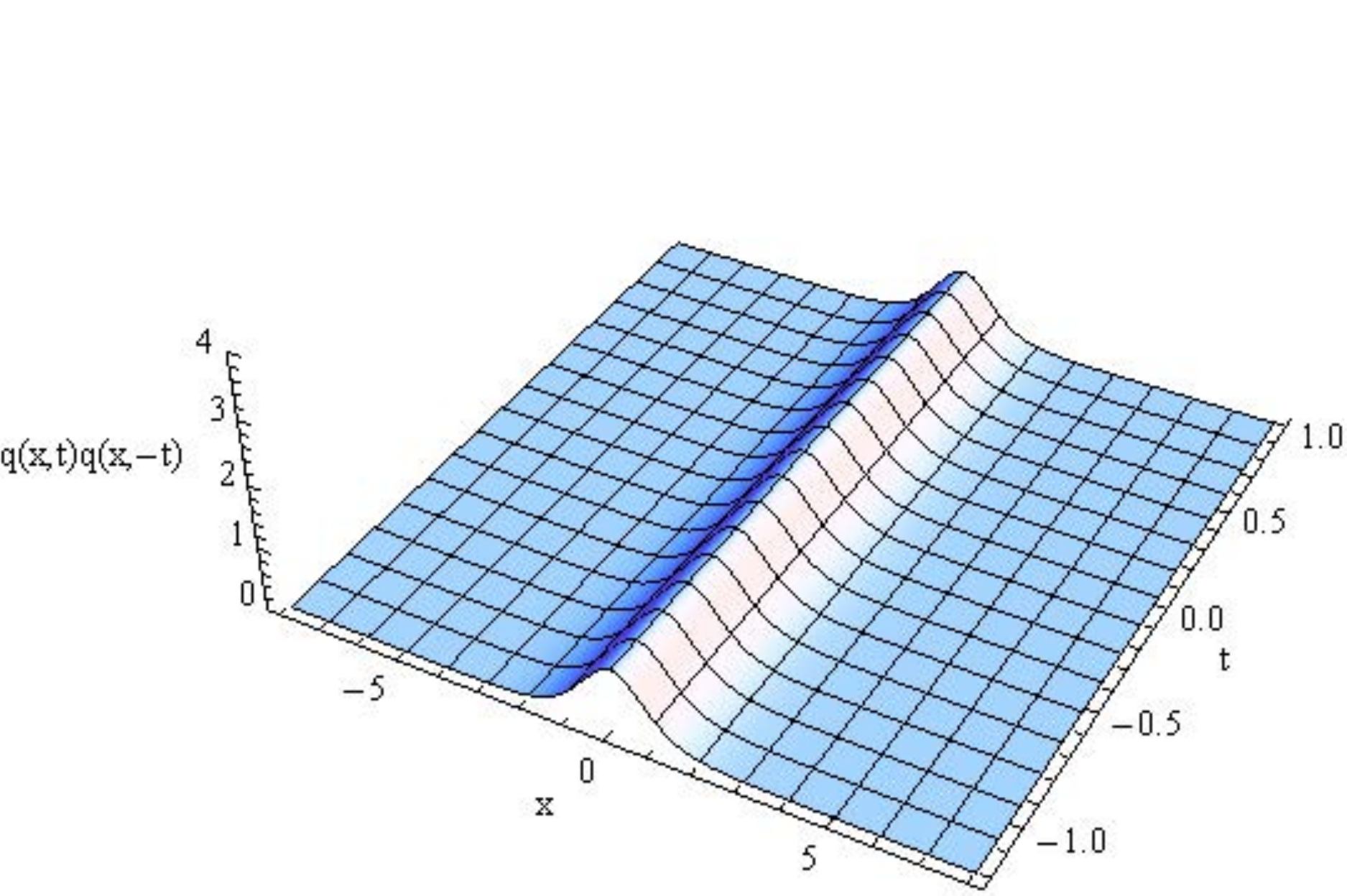}
}\\
\caption{(a). Shape and motion of 1SS \eqref{1ss-nls-2(1-1)-r} for equation \eqref{nls-2-eq},
in which $k=1, c=d=1$.
(b). Shape and motion of $-q(x,t)q(-x,t)$ where $q(x,t)$ is depicted in (a).
\label{Fig-1}}
\end{figure}

From Theorem \ref{solution-nls2s}, 2SS of Eq.\eqref{nls-2-eq} is given by
\begin{equation}\label{nls2-2ss}
q(x,t)= \frac{A}{B}
\end{equation}
where
\begin{equation*}
 A = 2(k_1^2 - k_2^2) [c_1 d_1 k_1 e^{k_2^2 t + k_1 x} (d_2^2 + c_2^2 e^{2 k_2 x}) -
   c_2 d_2 k_2 e^{k_1^2 t + k_2 x} (d_1^2+    c_1^2  e^{2 k_1 x })]
\end{equation*}
and
\begin{align*}
 B=&4 c_1 c_2 d_1 d_2 k_1 k_2 e^{(k_1 + k_2) x} (e^{2 k_1^2 t} + e^{2 k_2^2 t}) \\
 & -e^{(k_1^2 + k_2^2) t}[ c_1^2 e^{2 k_1 x}[c_2^2 (k_1 - k_2)^2 e^{2 k_2 x}+ d_2^2 (k_1 + k_2)^2]
 +d_1^2 [d_2^2 (k_1 - k_2)^2 + c_2^2(k_1 + k_2)^2 e^{2 k_2 x}]].
\end{align*}
Note that 2SS \eqref{nls2-2ss} is non-singular if $c_1 c_2 d_1 d_2 k_1 k_2 < 0$.
Although 1SS is always stationary, 2SS does exhibit interaction.
\begin{figure}[!h]
\centering \subfigure[]
{\label{fig-2-1} 
\includegraphics[width=2.5in]{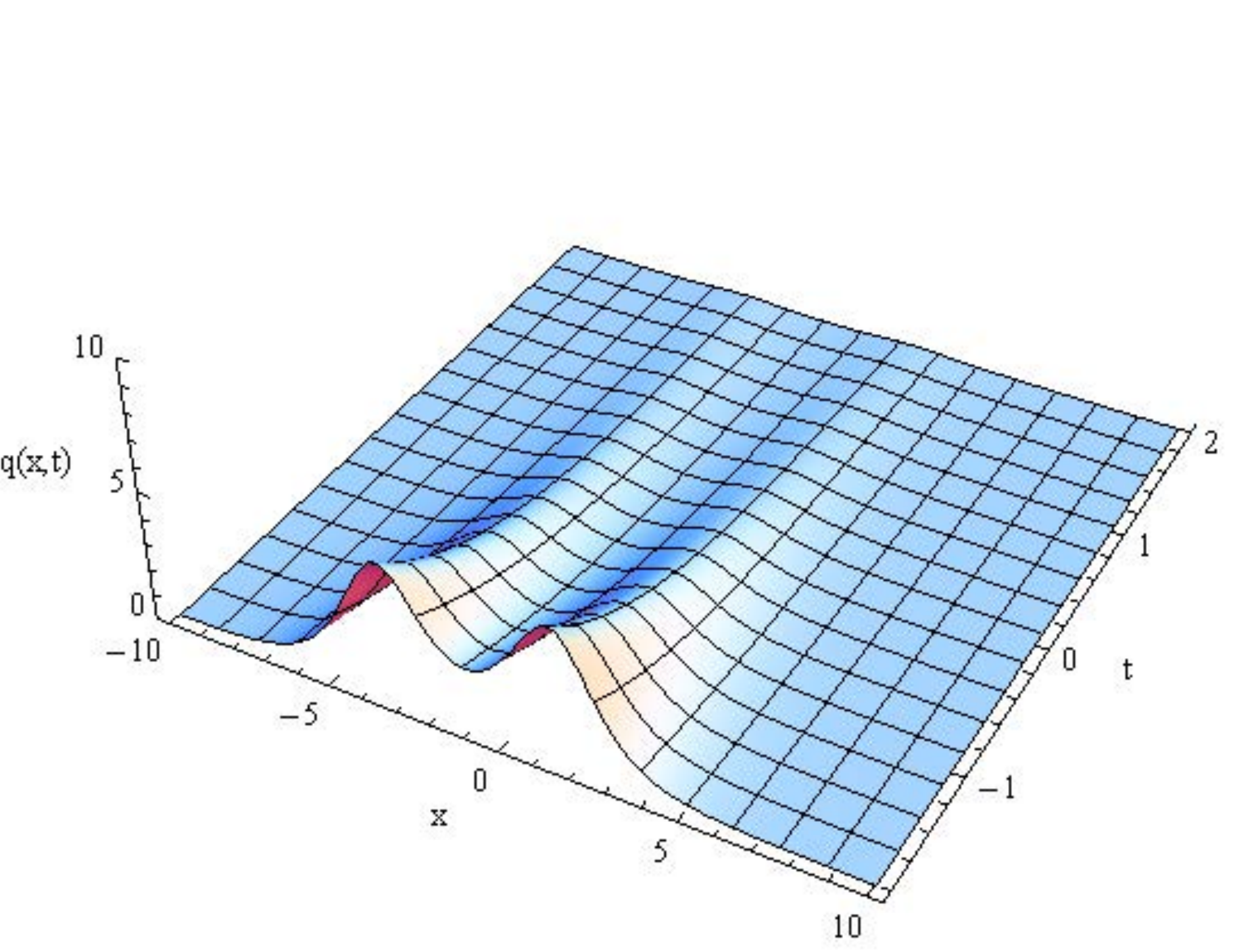}}~~~
\subfigure[]{
\label{fig-2-2} 
\includegraphics[width=2.5in]{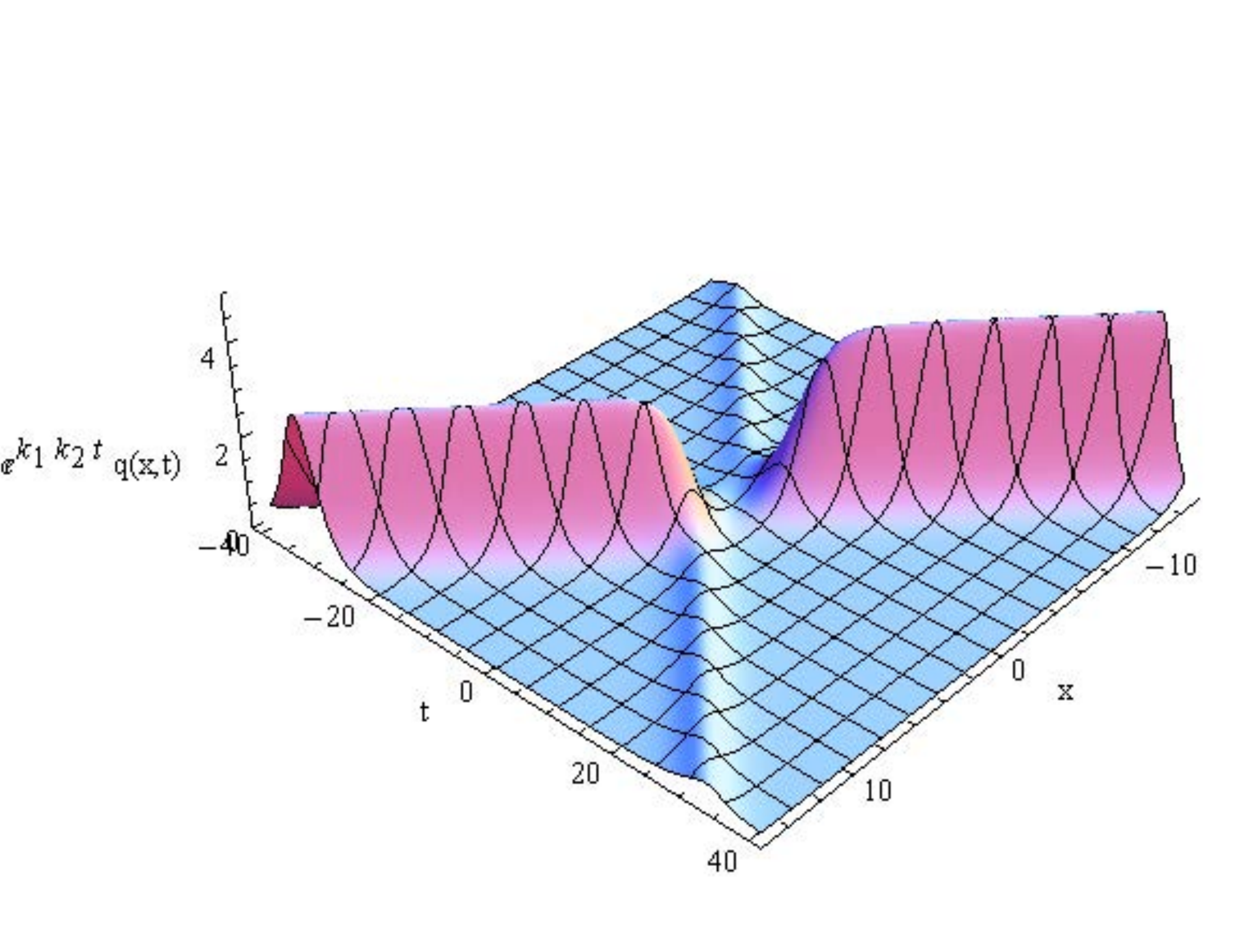}
}
\\
\caption{(a). Shape and motion of 2SS \eqref{nls2-2ss} for equation \eqref{nls-2-eq},
in which $k_1=0.8, k_2=1.2, c_1=2.5, c_2=4, d_1=1, d_2=-1$.
(b). Shape and motion of $e^{k_1k_2 t}q(x,t)$ where $q(x,t)$ is depicted in (a).
\label{Fig-2}}
\end{figure}

In Fig.\ref{fig-2-1} it seems that two solitons are apart from each other.
However, not only do they interact, but also trajectories of two involved solitons
are symmetric.
Fig.\ref{fig-2-1} shows two solitons with exponentially decreasing amplitudes.
We find that in the case $0<k_1<k_2$, $q(x,t)$ multiplied by $e^{k_1k_2t}$, i.e. $e^{k_1k_2t}q(x,t)$,
provides solitons with even amplitudes, as depicted in Fig.\ref{fig-2-2}.
This fact indicates that when $0<k_1<k_2$, both solitons in $q(x,t)$ decrease with $e^{-k_1k_2t}$,
as $e^{k_1k_2t}$ provides a compensation.

With the help of the compensation factor  $e^{k_1k_2t}$  we make asymptotic analysis for $t$ going to infinity in the case of  $0<k_1<k_2$.
To do that, we consider $e^{k_1k_2t}q(x,t)$ respectively along lines
\begin{equation}
x+(k_2-k_1)t=y,~~ x-(k_2-k_1)t=z,
\label{yz}
\end{equation}
where $y$ and $z$ can also be viewed as new variables.
Then we rewrite $e^{k_1k_2t}q(x,t)$ in coordinate frame $\{y,t\}$,
fix $y$ and let $t\to \pm\infty$. We find
\begin{equation}
e^{k_1k_2t}q(x,t)\sim \left\{
        \begin{array}{ll}
         \frac{2k_1(k_2^2-k_1^2)c_1d_2}{(k_2-k_1)^2d_1d_2e^{-k_1 y}-4k_1k_2c_1c_2 e^{k_2 y}},& t\to \infty,\\
         \frac{2k_2(k_2^2-k_1^2)c_1d_2}{4k_1k_2d_1d_2 e^{-k_1 y}-(k_2-k_1)^2c_1c_2e^{k_2 y}},& t\to -\infty.
        \end{array}
        \right.
\end{equation}
It indicates that
along the line $y= \mathrm{constant}$, there is a soliton;
from $t=-\infty$ to $t=\infty$, the soliton gains a shift
\begin{equation}
\Delta=\frac{2}{k_1+k_2} \ln \Bigl(\frac{k_2-k_1}{4k_1k_2}\Bigr)\label{shift}
\end{equation}
and its amplitude changes from
\[A_{-}^{(y)}=\frac{d_2k_1k_2\, 2^{\frac{k_1-k_2}{k_2+k_1}} \Bigl(-\frac{c_1c_2(k_1-k_2)^2}{d_1d_2k_1^2}\Bigr)^{\frac{k_2}{k_1+k_2}}}{c_2(k_1-k_2)}\]
to
\[A_{+}^{(y)}=\frac{c_1k_1k_2\, 2^{\frac{k_2-k_1}{k_2+k_1}} \Bigl(-\frac{d_1d_2(k_1-k_2)^2}{c_1c_2k_2^2}\Bigr)^{\frac{k_1}{k_1+k_2}}}{d_1(k_2-k_1)};\]
the soliton itself is not symmetric in terms of shape because it decays by $e^{k_2(k_2-k_1)t}$ when $t\to -\infty$
and  by $e^{-k_1(k_2-k_1)t}$ when $t\to \infty$.
In a similar way we rewrite $e^{k_1k_2t}q(x,t)$ in coordinate frame $\{z,t\}$,
fix $z$ and let $t\to \pm\infty$. It follows that
\begin{equation}
e^{k_1k_2t}q(x,t)\sim \left\{
        \begin{array}{ll}
         \frac{2k_1(k_2^2-k_1^2)c_2d_1}{(k_2-k_1)^2c_1c_2e^{k_1 z}-4k_1k_2d_1d_2 e^{-k_2 z}},& t\to \infty,\\
         \frac{2k_2(k_2^2-k_1^2)c_2d_1}{4k_1k_2c_1c_2 e^{k_1 z}-(k_2-k_1)^2d_1d_2e^{-k_2 z}},& t\to -\infty.
        \end{array}
        \right.
\end{equation}
In this turn along the line $z= \mathrm{constant}$, there is a soliton;
from $t=-\infty$ to $t=\infty$, the soliton gains a shift \eqref{shift}
and its amplitude changes from
\[A_{-}^{(z)}=\frac{c_2k_1k_2\, 2^{\frac{k_1-k_2}{k_2+k_1}} \Bigl(-\frac{d_1d_2(k_1-k_2)^2}{c_1c_2k_1^2}\Bigr)^{\frac{k_2}{k_1+k_2}}}{d_2(k_1-k_2)}\]
to
\[A_{+}^{(z)}=\frac{d_1k_1k_2\, 2^{\frac{k_2-k_1}{k_2+k_1}} \Bigl(-\frac{c_1c_2(k_1-k_2)^2}{d_1d_2k_2^2}\Bigr)^{\frac{k_1}{k_1+k_2}}}{c_1(k_2-k_1)};\]
the soliton itself is not symmetric in terms of shape.
Note that the lines $y=\mathrm{constant }$ and $z=\mathrm{constant }$ in \eqref{yz}
provide two lines that are completely symmetric in $\{x,t\}$ plane,
the trajectories of two solitons are therefore completely symmetric as well,
which is a distinguished interaction behavior
that is different from usual soliton systems.
Another interesting feather is thatamplitude of each soliton in 2SS is asymptotically related to not only $\{k_i\}$ but also the
phase parameters $\{c_i,d_i\}$.

\subsubsection{Dynamics of  $q_{(\sigma, \delta)=(-1,-1)}$}\label{sec-3-2-3}

Let us quickly go through dynamics of $q_{(\sigma, \delta)=(-1,-1)}$ governed by equation
\begin{equation}
q_t(x,t)+q_{xx}(x,t) + 2 q^2(x,t)q(-x,-t)=0.
\label{nls-2-eq-2}
\end{equation}
Its 1SS \eqref{1ss-nls-2(1-2)} is rewritten as
$
q(x,t) = k e^{-k^2 t} \, \mathrm{sech} kx, $
which is a stationary wave but without any initial phase. Its amplitude also exponentially decreases with time.

2SS is written as
\begin{equation}\label{nls2-2ss-2}
q(x,t)= \frac{A'}{B'},
\end{equation}
where
\[A'=\frac{1}{2}(k_2^2 - k_1^2)e^{-(k_1 + k_2) x} [e^{k_2^2 t + k_1 x} (1 + e^{2 k_2 x}) k_1 - e^{k_1^2 t + k_2 x} (1+e^{2 k_1 x }) k_2]\]
and
\begin{equation*}
B'= (e^{2 k_1^2 t} + e^{2 k_2^2 t}) k_1 k_2 +
    e^{(k_1^2 + k_2^2) t} [2 k_1 k_2 \sinh(k_1 x) \sinh(k_2 x)-(k_1^2 + k_2^2) \cosh(k_1 x) \cosh(k_2 x)].
\end{equation*}
The 2SS \eqref{nls2-2ss-2} is non-singular if $k_1 k_2 < 0$.
Similar to the previous case, although 1SS is always stationary, 2SS does exhibit interaction, as depicted in Fig.\ref{Fig-3}.
\begin{figure}[!h]
\centering \subfigure[]
{\label{fig-3-1} 
\includegraphics[width=2.5in]{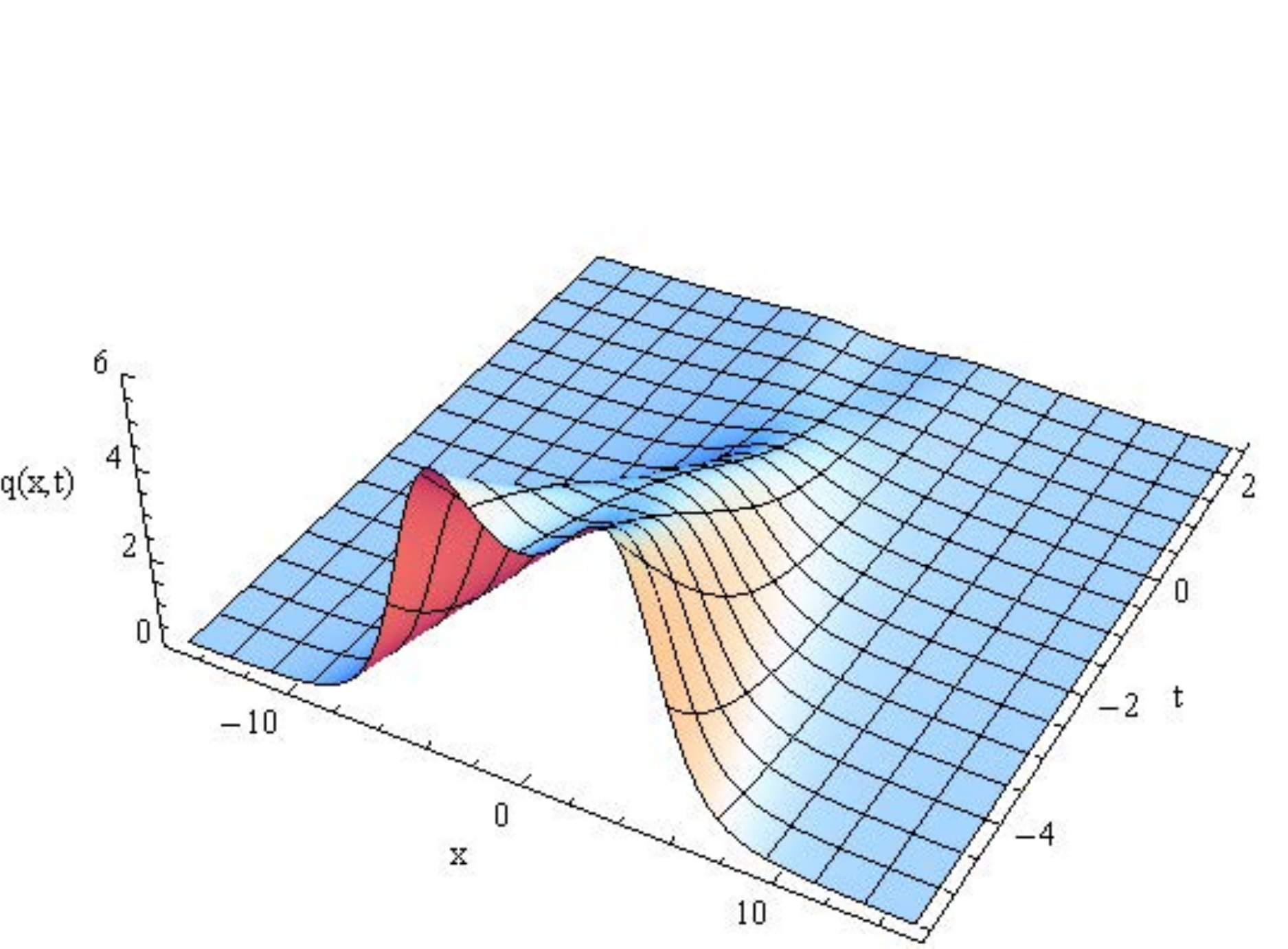}}~~~
\subfigure[]{
\label{fig-3-2} 
\includegraphics[width=2.5in]{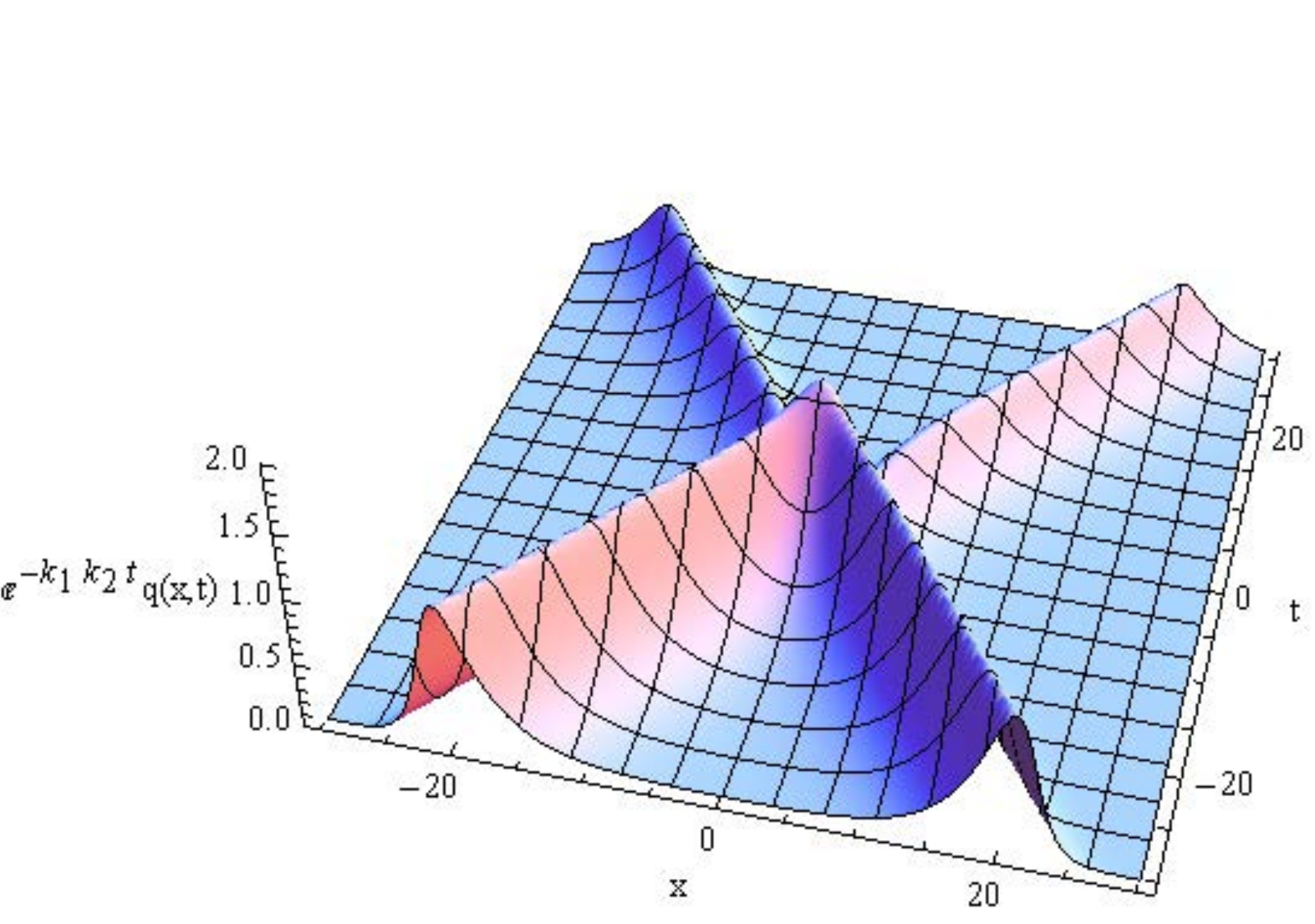}
}
\\
\caption{(a). Shape and motion of 2SS \eqref{nls2-2ss-2} for equation \eqref{nls-2-eq-2},
in which $k_1=-0.3, k_2=1.0$.
(b). Shape and motion of $e^{-k_1k_2 t}q(x,t)$ where $q(x,t)$ is depicted in (a).
\label{Fig-3}}
\end{figure}

We make asymptotic analysis for $q(x,t)$ with a compensation factor $e^{-k_1k_2t}$ for the case of $0<-k_1<k_2$.
Consider $e^{-k_1k_2t}q(x,t)$ respectively on lines
\begin{equation}
x+(k_2+k_1)t=y,~~ x-(k_2+k_1)t=z,
\label{yz-2}
\end{equation}
where $y$ and $z$ can be viewed as new variables.
Then we rewrite $e^{-k_1k_2t}q(x,t)$ in coordinate frame $\{y,t\}$,
fix $y$, let $t\to \pm\infty$ and we find
\begin{equation*}
e^{-k_1k_2t}q(x,t)\sim \left\{
        \begin{array}{ll}
         \frac{2k_1(k_1^2-k_2^2)}{(k_1+k_2)^2 e^{k_1 y}-4k_1k_2  e^{k_2 y}},& t\to \infty,\\
         \frac{2k_2(k_1^2-k_2^2) }{4k_1k_2  e^{k_1 y}-(k_2+k_1)^2 e^{k_2 y}},& t\to -\infty.
        \end{array}
        \right.
\end{equation*}
It indicates that
along the line $y= \mathrm{constant}$, there is a soliton;
from $t=-\infty$ to $t=\infty$, the soliton gains a shift
\begin{equation}
\Delta=\frac{2}{k_1-k_2} \ln \Bigl(\frac{k_2+k_1}{4k_1k_2}\Bigr)\label{shift-2}
\end{equation}
and its amplitude changes from
\[A_{-}^{(y)}=-(k_1+k_2)\frac{k_2}{k_1}\cdot 2^{\frac{k_1+k_2}{k_1-k_2}} \cdot \Bigl(\frac{k_1+k_2}{k_1}\Bigr)^{\frac{2k_1}{k_2-k_1}}\]
to
\[A_{+}^{(y)}=-(k_1+k_2)\frac{k_1}{k_2}\cdot 2^{\frac{k_1+k_2}{k_2-k_1}} \cdot \Bigl(\frac{k_1+k_2}{k_2}\Bigr)^{\frac{2k_2}{k_1-k_2}};\]
the soliton itself is not symmetric in terms of shape.
If we rewrite $-e^{k_1k_2t}q(x,t)$ in coordinate frame $\{z,t\}$,
fix $z$ and let $t\to \pm\infty$, we have
\begin{equation*}
e^{-k_1k_2t}q(x,t)\sim \left\{
        \begin{array}{ll}
         \frac{2k_1(k_1^2-k_2^2)}{(k_1+k_2)^2 e^{-k_1 z}-4k_1k_2 e^{-k_2 z}},& t\to \infty,\\
         \frac{2k_2(k_1^2-k_2^2)}{4k_1k_2 e^{-k_1 z}-(k_1+k_2)^2 e^{-k_2 z}},& t\to -\infty.
        \end{array}
        \right.
\end{equation*}
Thus, asymptotically,
along the line $z= \mathrm{constant}$, there is a soliton;
from $t=-\infty$ to $t=\infty$, the soliton gains a shift \eqref{shift-2}
and its amplitude changes from $A_{-}^{(z)}=A_{-}^{(y)}$ to $A_{+}^{(z)}=A_{+}^{(y)}$;
the soliton itself is not symmetric in terms of shape.
Again, the trajectories of two solitons are completely symmetric  in $\{x,t\}$ plane.

\subsection{Solutions of the NLS(I), mKdV and cmKdV hierarchies}\label{sec-3-3}

For the NLS(I), mKdV and cmKdV hierarchies \eqref{nls-hie1}, \eqref{mkdv-hie} and \eqref{cmkdv-hie},
we skip proofs which are analogues to the NLS(II) case, and
we present constraint relations and solutions of these hierarchies.
In the following we always suppose $T\in \mathbb{C}_{2(n+1)\times 2(n+1)}$.

\begin{theorem}\label{T:NLS-I}
For the NLS(I) hierarchy \eqref{nls-hie1}, consider hierarchy \eqref{akns-hie-even} and $\varphi$ and $\psi$ defined in \eqref{akns-var-psi-A-c-d-even}.
When we impose constraint $m=n$,
\begin{equation}
\psi(x,t)=T \varphi^*(\sigma x,t),~~ C^-=TC^{+*},
\label{constr-NLS-I}
\end{equation}
and require $T$ to meet
\begin{equation}
AT+\sigma TA^*=0 ,~~ TT^*=\delta \sigma I,
\label{NLS-I-TA}
\end{equation}
then the NLS(I) hierarchy \eqref{nls-hie1} admit solutions
\begin{equation}\label{nls-I-q-solu}
 q(x)= \frac{2|\widehat{\varphi}^{(n-1)}; \widehat{\psi}^{(n+1)}|}{|\widehat{\varphi}^{(n)}; \widehat{\psi}^{(n)}|}.
\end{equation}
If  $T$ and $A$ are block matrices \eqref{nls2-example-tac} where
$T_i$ and $A_i$ are $(n+1)\times (n+1)$ matrices,
then, solutions to \eqref{NLS-I-TA} are given as below.
\begin{table}[H]
\begin{center}
\begin{tabular}{|c|c|c|}
\hline
   $(\sigma, \delta)$    & $T$ &  $A$    \\
\hline
   $(1,-1)$              & $T_1=T_4=\mathbf{0}_{n+1}$, $T_3=-T_2 =\mathbf{I}_{n+1}$ & $ K_1=-K^*_4=\mathbf{K}_{n+1}$ \\
\hline
   $(1,1)$              & $T_1=T_4=\mathbf{0}_{n+1}$, $T_3=T_2 =\mathbf{I}_{n+1}$ & $ K_1=-K^*_4=\mathbf{K}_{n+1}$ \\
\hline
   $(-1,-1)$              & $T_1=T_4=\mathbf{0}_{n+1}$, $T_3=T_2 =\mathbf{I}_{n+1}$ & $ K_1=K^*_4=\mathbf{K}_{n+1}$ \\
\hline
   $(-1,1)$              & $T_1=T_4=\mathbf{0}_{n+1}$, $T_3=-T_2 =\mathbf{I}_{n+1}$ & $ K_1=K^*_4=\mathbf{K}_{n+1}$ \\
\hline
\end{tabular}
\caption{$T$ and $A$ for the NLS(I) hierarchy}
\label{Tab-2}
\end{center}
\end{table}

In \eqref{nls-I-q-solu} the vector $\varphi$ can be chosen as the following.
When $\mathbf{K}_{n+1}$ is the diagonal matrix \eqref{K-n},
we have 
\begin{equation}\label{nls-II-varphi-psi}
\varphi= (c_1 e^{\theta(k_1)}, c_2 e^{\theta(k_2)},\cdots,c_{n+1} e^{\theta(k_{n+1})},
 d_1 e^{\theta(-k_1^*)}, d_2 e^{\theta(-k_2^*)},\cdots,d_{n+1} e^{\theta(-k_{n+1}^*)})^T
\end{equation}
for the case of $\sigma=1$, and
\begin{equation}\label{nls-II-varphi-psi-1}
\varphi= (c_1 e^{\theta(k_1)}, c_2 e^{\theta(k_2)},\cdots,c_{n+1} e^{\theta(k_{n+1})},
 d_1 e^{\theta(k_1^*)}, d_2 e^{\theta(k_2^*)},\cdots,d_{n+1} e^{\theta(k_{n+1}^*)})^T
\end{equation}
for the case of $\sigma=-1$, where
\begin{equation}\label{nls-I-scatter}
\theta(k_l)=\frac{1}{2}k_l x+\frac{i}{2}\sum^{\infty}_{j=1}k_l^{2j}t_{2j}.
\end{equation}
When  $\mathbf{K}_{n+1}$ is a $(n+1)\times (n+1)$ Jordan matrix w.r.t. $J_{n+1}(k)$ defined as in \eqref{Jordan},
we have
 \begin{equation}\label{nls-I-varphi-jord}
 \varphi= \left(c e^{\theta(k)}, \frac{\partial_{k}}{1!}(c e^{\theta(k)}),\cdots, \frac{\partial^n_{k}}{n!} (c e^{\theta(k)}),
 d e^{\theta(-k^*)}, \frac{\partial_{k^*}}{1!}(d e^{\theta(-k^*)}),\cdots,\frac{\partial^n_{k^*}}{n!}(d e^{\theta(-k^*)})\right)^T
 \end{equation}
for the case of $\sigma=1$, and
 \begin{equation}\label{nls-I-varphi-jord}
 \varphi= \left(c e^{\theta(k)}, \frac{\partial_{k}}{1!}(c e^{\theta(k)}),\cdots, \frac{\partial^n_{k}}{n!} (c e^{\theta(k)}),
 d e^{\theta(k^*)}, \frac{\partial_{k^*}}{1!}(d e^{\theta(k^*)}),\cdots,\frac{\partial^n_{k^*}}{n!}(d e^{\theta(k^*)})\right)^T
 \end{equation}
for the case of $\sigma=-1$.

\end{theorem}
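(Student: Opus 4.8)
The plan is to follow verbatim the template of the proof of Theorem~\ref{solution-nls2s}, with the time reversal $t\mapsto -t$ replaced throughout by complex conjugation together with the spatial change $x\mapsto\sigma x$. Concretely, in terms of the transformation \eqref{akns-transformation} and the double Wronskians \eqref{anks-solution-fgh}, the target reduction $r(x,t)=\delta q^*(\sigma x,t)$ of \eqref{nls-hie-reduction1} reads $-g(x,t)/f(x,t)=\delta\,\overline{h(\sigma x,t)}\big/\overline{f(\sigma x,t)}$. Accordingly I would split the argument into two parts: first verify that the constraint \eqref{constr-NLS-I} is compatible with the algebraic conditions \eqref{NLS-I-TA}, and then convert that constraint into relations among $f,g,h$ that produce the claimed reduction.

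For compatibility, I would start from the even-member form \eqref{akns-var-psi-A-c-d-even}, carrying the factor $i$ induced by $t\mapsto it$ so that the time part reads $\tfrac{i}{2}\sum_j A^{2j}t_{2j}$, as recorded in \eqref{nls-I-scatter}. Substituting $C^-=TC^{+*}$ and, from \eqref{NLS-I-TA}, $A=-\sigma TA^*T^{-1}$, the point is that the odd power of $A$ multiplying $x$ absorbs the sign $-\sigma$, whereas the even powers are insensitive to it since $A^{2j}=(-\sigma)^{2j}(TA^*T^{-1})^{2j}=T(A^{2j})^*T^{-1}$; conjugating $A\mapsto A^*$ in the exponent is then exactly undone by passing to $\varphi^*(\sigma x,t)$, and pulling $T$ out front yields $\psi(x,t)=T\varphi^*(\sigma x,t)$, just as in the NLS(II) computation.

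For the Wronskian relations, using a conjugated-and-shifted column notation analogous to $\widehat{\varphi}^{(s)}(ax)_{[bx]}$, I would rewrite $f,g,h$ with the $\psi$-block replaced by $T\varphi^*(\sigma x,t)$ and then evaluate $\overline{f(\sigma x,t)}$ and $\overline{h(\sigma x,t)}$. This uses: $\overline{|M|}=|\overline M|$; the interchange of the two $(n+1)$-column blocks, which produces $(-1)^{(n+1)^2}$; the rescaling $\partial_{\sigma x}=\sigma\partial_x$, which produces powers of $\sigma$; and the second identity of \eqref{NLS-I-TA} in the form $T^*=\delta\sigma\,T^{-1}$, which converts the $T^{-1}$-columns back to $T$-columns at the cost of $(\delta\sigma)^{n+1}$. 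As in Theorem~\ref{solution-nls2s}, the global determinant factor $|T|$ (equivalently $\overline{|T|}=|T|^{-1}$, since $|T||T^*|=(\delta\sigma)^{2(n+1)}=1$) is common to both $\overline{f(\sigma x,t)}$ and $\overline{h(\sigma x,t)}$ and cancels in the quotient, leaving precisely $\delta\,\overline{h(\sigma x,t)}\big/\overline{f(\sigma x,t)}=-g/f=r$.

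Finally I would check that each row of Table~\ref{Tab-2} satisfies \eqref{NLS-I-TA} by a direct $2\times2$ block computation using the real or purely imaginary block entries, and read off the explicit $\varphi$ of \eqref{nls-II-varphi-psi}--\eqref{nls-I-varphi-jord} from the diagonal and Jordan canonical forms of $\mathbf{K}_{n+1}$. I expect the main obstacle to be the exact bookkeeping in the previous step: one must confirm that the accumulated $(-1)^{(n+1)^2}$-type signs, the $\sigma$-powers from the derivative rescalings, the $(\delta\sigma)$-powers from $T^*=\delta\sigma T^{-1}$, and the conjugation-induced interchange of $|T|$ with $\overline{|T|}$ combine, after multiplication by $\delta$, to exactly $1$ rather than merely $\pm1$. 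Because conjugation---unlike the time reversal of the NLS(II) case---also acts on the determinant factor $|T|$, this is the step where the sign and phase must be tracked most carefully.
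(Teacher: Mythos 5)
Your proposal is correct and is essentially the paper's own argument: the paper explicitly skips the proof of Theorem~\ref{T:NLS-I}, stating it is an analogue of the NLS(II) case (Theorem~\ref{solution-nls2s}), and your proof is precisely that analogue carried out correctly, with complex conjugation (which flips the sign of the $i$ in the time exponent $\theta$) playing the role that $t\mapsto -t$ played there. Your sign bookkeeping concern does resolve cleanly: one finds $f^*(\sigma x,t)=(-1)^{(n+1)^2}(\delta\sigma)^{n+1}|T^*|f(x,t)$ and $h^*(\sigma x,t)=(-1)^{n(n+2)}\sigma(\delta\sigma)^{n}|T^*|g(x,t)$, so the common factor $|T^*|$ cancels and the prefactors combine, after multiplication by $\delta$, to exactly $-1$, giving $\delta q^*(\sigma x,t)=-g/f=r$.
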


As examples we list one-soliton solutions for the representative equation \eqref{nls-eq1} of the NLS(I) type:
\begin{subequations}
\begin{align}
& q_{\sigma=1,\delta=-1}=-\frac{(k+k^*)c^* d^*}{|c|^2 e^{kx+ik^2 t_2}+ |d|^2e^{-k^*x + ik^{*2}  t_2}},\label{1ss-h}\\
& q_{\sigma=1,\delta=1}=\frac{(k+k^*)c^* d^*}{|c|^2e^{kx + ik^{2} t_2}-|d|^2 e^{-k^*x + ik^{*2} t_2}},\\
& q_{\sigma=-1,\delta=-1}=\frac{(k-k^*)c^* d^*}{|c|^2 e^{kx+ik^2 t_2}-|d|^2 e^{k^* x+ ik^{*2}  t_2}},\\
& q_{\sigma=-1,\delta=1}=-\frac{(k - k^*)c^* d^*}{|c|^2e^{kx+ik^2t_2}+|d|^2 e^{k^* x+ ik^{*2}  t_2}},\label{1ss-ndf}
\end{align}
\end{subequations}
where we take $k=k_1$ and $C^+=(c,d)^T$.

\vskip 10pt

\begin{theorem}\label{T:mkdv}
For the mKdV hierarchy \eqref{mkdv-hie}, consider hierarchy \eqref{akns-hie-odd} and $\varphi$ and $\psi$ defined in \eqref{akns-var-psi-A-c-d-odd}.
In double Wronskians we take $m=n$ and impose constraint
\begin{equation}
\psi(x,t)=T \varphi(\sigma_1 x, \sigma_2 t),~~ C^-=TC^{+},
\label{constr-mkdv}
\end{equation}
where $T$ satisfies
\begin{equation}
AT+\sigma_1 TA=0 ,~ TT=\delta \sigma_1 I.
\label{mkdv-TA}
\end{equation}
The mKdV hierarchy \eqref{mkdv-hie} admit solutions
\begin{equation}\label{mkdv-q-solu}
 q(x)= \frac{2|\widehat{\varphi}^{(n-1)}; \widehat{\psi}^{(n+1)}|}{|\widehat{\varphi}^{(n)}; \widehat{\psi}^{(n)}|}.
\end{equation}
If  $T$ and $A$ are block matrices \eqref{nls2-example-tac} where
$T_i$ and $A_i$ are $(n+1)\times (n+1)$ matrices,
then, solutions to \eqref{mkdv-TA} are given as below. 
\begin{table}[H]
\begin{center}
\begin{tabular}{|c|c|c|}
\hline
   $(\sigma_i, \delta)$    & $T$ &  $A$    \\
\hline
   $(1,1)$              & $T_1=T_4=\mathbf{0}_{n+1}$, $T_3=T_2 =\mathbf{I}_{n+1}$ & $ K_1=-K_4=\mathbf{K}_{n+1}$ \\
\hline
   $(1,-1)$              & $T_1=T_4=\mathbf{0}_{n+1}$, $T_3=-T_2 =\mathbf{I}_{n+1}$ & $ K_1=-K_4=\mathbf{K}_{n+1}$ \\
\hline
   $(-1,1)$              & $T_1=-T_4=i\mathbf{I}_{n+1}$, $T_3=T_2 =\mathbf{0}_{n+1}$ & $ K_1=-K_4=\mathbf{K}_{n+1}$ \\
\hline
   $(-1,-1)$              & $T_1=-T_4=\mathbf{I}_{n+1}$, $T_3=T_2 =\mathbf{0}_{n+1}$ & $ K_1=-K_4=\mathbf{K}_{n+1}$ \\
\hline
\end{tabular}
\caption{$T$ and $A$ for the mKdV hierarchy}
\label{Tab-3}
\end{center}
\end{table}

In solution \eqref{mkdv-q-solu} the vector $\varphi$ can be chosen as the following.
When $\mathbf{K}_{n+1}$ is the diagonal matrix \eqref{K-n},
we have
\begin{equation}\label{mkdv-varphi-psi}
\varphi= (c_1 e^{\xi(k_1)}, c_2 e^{\xi(k_2)},\cdots,c_{n+1} e^{\xi(k_{n+1})},
 d_1 e^{\xi(-k_1)}, d_2 e^{\xi(-k_2)},\cdots,d_{n+1} e^{\xi(-k_{n+1})})^T,
\end{equation}
where $\xi(k_i)$ is defined by \eqref{kdv-scatter}.
When  $\mathbf{K}_{n+1}$ is a $(n+1)\times (n+1)$ Jordan matrix w.r.t. $J_{n+1}(k)$ defined as in \eqref{Jordan},
we have
 \begin{equation}\label{mkdv-varphi-jord}
 \varphi= \left(c e^{\xi(k)}, \frac{\partial_{k}}{1!}(c e^{\xi(k)}),\cdots, \frac{\partial^n_{k}}{n!} (c e^{\xi(k)}),
 d e^{\xi(-k)}, \frac{\partial_{k}}{1!}(d e^{\xi(-k)}),\cdots,\frac{\partial^n_{k}}{n!}(d e^{\xi(-k)})\right)^T.
 \end{equation}

\end{theorem}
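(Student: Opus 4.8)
The plan is to follow the template of the NLS(II) proof (Theorem~\ref{solution-nls2s}) almost verbatim, since the reduction $r(x,t)=\delta q(\sigma_1 x,\sigma_2 t)$ differs from the reverse-$t$ case only in the space--time transformation applied to $\varphi$. First I would verify that the constraint \eqref{constr-mkdv} is compatible with the defining formulas \eqref{akns-var-psi-A-c-d-odd}. The mechanism is that, for the odd-indexed flows, the exponent of $\varphi$ contains only odd powers $A^{2j+1}$, and the relation $AT+\sigma_1 TA=0$ forces $A^{2j+1}T=-\sigma_1 TA^{2j+1}$ for every $j$ (even powers commute with $T$, odd powers anticommute up to the factor $\sigma_1$). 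Substituting $C^-=TC^+$ and pulling $T$ to the left of the exponential then yields $\psi(x,t)=T\exp\!\bigl(\tfrac{\sigma_1}{2}\sum_j A^{2j+1}t_{2j+1}\bigr)C^+$. Because $\sigma_1\sigma_2=1$ gives $\sigma_2=\sigma_1$, the scalar $\sigma_1$ rescales $t_1=x$ and every $t_{2j+1}$ simultaneously, so the right-hand side is exactly $T\varphi(\sigma_1 x,\sigma_2 t)$, confirming \eqref{constr-mkdv}.

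Next I would establish the reduction $r=\delta q(\sigma_1 x,\sigma_2 t)$. Using \eqref{constr-mkdv} I would rewrite $f,g,h$ from \eqref{anks-solution-fgh} (with $m=n$) as double Wronskians whose second block of columns is $T$ times derivatives of $\varphi(\sigma_1 x,\sigma_2 t)$, exactly as in the display preceding \eqref{nls2-fgh}. Then I would evaluate $f$ and $h$ at the transformed point $(\sigma_1 x,\sigma_2 t)$. Three bookkeeping effects enter: (i) differentiating $\varphi(\sigma_1 x,\cdots)$ with respect to $x$ produces a chain-rule factor $\sigma_1$ per derivative, accumulating over the $n+1$ columns of each block; (ii) swapping the two column blocks to restore the standard ordering contributes a sign of the form $(-1)^{(n+1)^2}$; and (iii) the assumption $TT=\delta\sigma_1 I$ lets me replace $T^{-1}$ by $\delta\sigma_1 T$, producing a factor $(\delta\sigma_1)^{n+1}$, together with the overall determinant $|T|$. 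Collecting these I expect relations of the shape $f(\sigma_1 x,\sigma_2 t)=\mu_f\,f(x,t)$ and $h(\sigma_1 x,\sigma_2 t)=\mu_h\,g(x,t)$ with explicit scalars, arranged so that $\delta\mu_h/\mu_f=-1$. Via \eqref{akns-transformation} this gives $\delta q(\sigma_1 x,\sigma_2 t)=\delta h(\sigma_1 x,\sigma_2 t)/f(\sigma_1 x,\sigma_2 t)=-g/f=r(x,t)$, which is precisely \eqref{mkdv-hie-red}, and $q$ is then given by \eqref{mkdv-q-solu}.

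It remains to verify the auxiliary claims: that the block matrices in Table~\ref{Tab-3} satisfy \eqref{mkdv-TA}, and that the explicit vectors \eqref{mkdv-varphi-psi} and \eqref{mkdv-varphi-jord} are the instances of \eqref{akns-var-psi-A-c-d-odd} obtained when $\mathbf{K}_{n+1}$ is the diagonal matrix \eqref{K-n} or the Jordan block \eqref{Jordan}. Both are direct computations: for Table~\ref{Tab-3} one multiplies out $AT+\sigma_1 TA$ and $TT$ in $2\times2$ block form, using the relation $K_1=-K_4$ and the stated block forms of $T_i$, which make $AT+\sigma_1 TA$ vanish blockwise and reduce $TT$ to $\delta\sigma_1 I$; for the explicit $\varphi$ one simply exponentiates the canonical $A$ and reads off the entries, the Jordan case following by the usual $\partial_k$-differentiation trick on the diagonal exponentials.

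The main obstacle is the sign-and-power bookkeeping in the second paragraph. One must confirm that the chain-rule powers of $\sigma_1$, the block-transposition sign $(-1)^{(n+1)^2}$, and the scalar $(\delta\sigma_1)^{n+1}$ coming from $TT=\delta\sigma_1 I$ combine so that the residual factor in $\delta h(\sigma_1 x,\sigma_2 t)/f(\sigma_1 x,\sigma_2 t)$ is exactly $-1$ for every parity of $n$ and every admissible $(\sigma_1,\delta)$, with no leftover sign. A mitigating feature compared with the NLS(II) computation is that here $\sigma_2=\sigma_1$, so the spatial and temporal rescalings act with the same sign and reinforce rather than partially cancel, which I expect to make the final collapse cleaner; the only genuine case distinction is the parity of $n$ entering through $(-1)^{(n+1)^2}$.
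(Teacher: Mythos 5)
Your proposal is correct and is essentially the paper's own argument: the paper omits the proof of Theorem~\ref{T:mkdv} precisely because it is the analogue of the NLS(II) proof of Theorem~\ref{solution-nls2s}, which is the template you follow, and your sign bookkeeping (chain-rule powers of $\sigma_1$, the block-swap sign $(-1)^{(n+1)^2}$ resp.\ $(-1)^{n(n+2)}$, and the factor $(\delta\sigma_1)^{n+1}$ resp.\ $(\delta\sigma_1)^{n}$ from $T^{-1}=\delta\sigma_1 T$) does collapse to $\delta q(\sigma_1 x,\sigma_2 t)=-g/f=r$ for every parity of $n$. The verification of Table~\ref{Tab-3} and of the explicit vectors \eqref{mkdv-varphi-psi}, \eqref{mkdv-varphi-jord} by direct block multiplication and exponentiation is likewise exactly what the paper intends.
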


\vskip 10pt
One-soliton solutions for the representative equation \eqref{mkdv-eq} of the mKdV type are respectively
\begin{align*}
& q_{\sigma_i=1,\delta=-1}=-\frac{2k c d}{c^2 e^{kx+ k^{3}t_3} + d^2 e^{-kx-k^3t_3}},~~ (i=1,2),\\
& q_{\sigma_i=-1,\delta=-1}=\frac{2k}{e^{kx+k^3t_3}+e^{-kx-k^{3}t_3}}, ~~ (i=1,2),
\end{align*}
where we take $k=k_1$ and $C^+=(c,d)^T$. Here we only write out solutions for two equations in \eqref{mkdv-eq},
due to the following remark.

\begin{remark}\label{rem-3}
On the level of equations,
\eqref{mkdv-hie} with $(\sigma_i, \delta)=(\pm1,1)$ and \eqref{mkdv-hie} with $(\sigma_i, \delta)=(\pm1,-1)$ can be transformed from each other
by simply taking $q\to iq$.
Such a transformation can also be seen on the level of solutions we obtained.
First, for $q_{(\sigma_i, \delta)=(1,-1)}$, due to arbitrariness  of vector $C^{+}$,
if we replace $C^{+}$ by $\mathrm{Diag}(\mathbf{I}_{n+1}, i \mathbf{I}_{n+1})C^{+}$,
then, $q_{(\sigma_i, \delta)=(1,-1)}$ defined by \eqref{mkdv-q-solu}  will be changed to $iq_{(\sigma_i, \delta)=(1,-1)}$,
which gives a solution to \eqref{mkdv-hie} with $(\sigma_i, \delta)=(1,1)$.
Second, for $q$ given in the form \eqref{mkdv-q-solu}, from the line $(\sigma_i, \delta)=(-1,-1)$ and line $(\sigma_i, \delta)=(-1,1)$ in Table \ref{Tab-3},
one can easily find the relation $q_{(\sigma_i, \delta)=(-1,-1)}=iq_{(\sigma_i, \delta)=(-1,1)}$.
With these connections we only need to consider the cases $(\sigma_i, \delta)=(1,-1)$ and $(\sigma_i, \delta)=(-1,-1)$.
\end{remark}

\begin{remark}\label{rem-4}
Due to the special block structure of the matrix $T$ for the cases of $(\sigma_i,\delta)=(-1,\pm 1)$ in Table \ref{Tab-3},
both $C^{\pm}$ can be gauged to be $(1,1,\cdots, 1)^T$,
which means the solutions we obtain for the cases $(\sigma_i,\delta)=(-1,\pm 1)$
are independent of phase parameters $C^{\pm}$.
\end{remark}

\begin{theorem}\label{T:cmkdv}
For the cmKdV hierarchy \eqref{cmkdv-hie}, consider hierarchy \eqref{akns-hie-odd} and $\varphi$ and $\psi$ defined in \eqref{akns-var-psi-A-c-d-odd}.
In double Wronskians we take $m=n$ and impose constraint
\begin{equation}
\psi(x,t)=T \varphi^*(\sigma_1 x,\sigma_2 t),~~ C^-=TC^{+*},
\label{constr-cmkdv}
\end{equation}
where $T$ satisfies
\begin{equation}
AT+\sigma_1 TA^*=0 ,~ TT^*=\delta \sigma_1 I.
\label{cmkdv-TA}
\end{equation}
The cmKdV hierarchy \eqref{cmkdv-hie} admit solutions
\begin{equation}\label{cmkdv-q-solu}
 q(x,t)= \frac{2|\widehat{\varphi}^{(n-1)}; \widehat{\psi}^{(n+1)}|}{|\widehat{\varphi}^{(n)}; \widehat{\psi}^{(n)}|}.
\end{equation}
If  $T$ and $A$ are block matrices \eqref{nls2-example-tac} where
$T_i$ and $A_i$ are $(n+1)\times (n+1)$ matrices,
then, solutions to \eqref{cmkdv-TA} are given as below, 
\begin{table}[H]
\begin{center}
\begin{tabular}{|c|c|c|}
\hline
   $(\sigma_i, \delta)$    & $T$ &  $A$    \\
\hline
   $(1,1)$              & $T_1=T_4=\mathbf{0}_{n+1}$, $T_3=T_2 =\mathbf{I}_{n+1}$ & $ K_1=-K^*_4=\mathbf{K}_{n+1}$ \\
\hline
   $(1,-1)$              & $T_1=T_4=\mathbf{0}_{n+1}$, $T_3=-T_2 =\mathbf{I}_{n+1}$ & $ K_1=-K^*_4=\mathbf{K}_{n+1}$ \\
\hline
   $(-1,1)$              & $T_1=T_4=\mathbf{0}_{n+1}$, $T_3=-T_2 =\mathbf{I}_{n+1}$ & $ K_1=K^*_4=\mathbf{K}_{n+1}$ \\
\hline
   $(-1,-1)$              & $T_1=T_4=\mathbf{0}_{n+1}$, $T_3=T_2 =\mathbf{I}_{n+1}$ & $ K_1=K^*_4=\mathbf{K}_{n+1}$ \\
\hline
\end{tabular}
\caption{$T$ and $A$ for the cmKdV hierarchy}
\label{Tab-4}
\end{center}
\end{table}

In solution \eqref{cmkdv-q-solu} the vector $\varphi$ can be chosen as the following.
When $\mathbf{K}_{n+1}$ is the diagonal matrix \eqref{K-n},
we have
\begin{equation}\label{cmkdv-varphi-psi}
\varphi= (c_1 e^{\xi(k_1)}, c_2 e^{\xi(k_2)},\cdots,c_{n+1} e^{\xi(k_{n+1})},
 d_1 e^{\xi(-k_1^*)}, d_2 e^{\xi(-k_2^*)},\cdots,d_{n+1} e^{\xi(-k_{n+1}^*)})^T
\end{equation}
for the case of $\sigma_i=1$, and
\begin{equation}\label{cmkdv-varphi-psi-1}
\varphi= (c_1 e^{\xi(k_1)}, c_2 e^{\xi(k_2)},\cdots,c_{n+1} e^{\xi(k_{n+1})},
 d_1 e^{\xi(k_1^*)}, d_2 e^{\xi(k_2^*)},\cdots,d_{n+1} e^{\xi(k_{n+1}^*)})^T
\end{equation}
for the case of $\sigma_i=-1$  where $\xi(k_i)$ is defined by \eqref{kdv-scatter}.
When  $\mathbf{K}_{n+1}$ is a $(n+1)\times (n+1)$ Jordan matrix w.r.t. $J_{n+1}(k)$ defined as in \eqref{Jordan},
we have
 \begin{equation}\label{cmkdv-varphi-jord}
 \varphi= \left(c e^{\xi(k)}, \frac{\partial_{k}}{1!}(c e^{\xi(k)}),\cdots, \frac{\partial^n_{k}}{n!} (c e^{\xi(k)}),
 d e^{\xi(-k^*)}, \frac{\partial_{k^*}}{1!}(d e^{\xi(-k^*)}),\cdots,\frac{\partial^n_{k^*}}{n!}(d e^{\xi(-k^*)})\right)^T
 \end{equation}
 for the case of $\sigma_i=1$, and
 \begin{equation}\label{cmkdv-varphi-jord-1}
 \varphi= \left(c e^{\xi(k)}, \frac{\partial_{k}}{1!}(c e^{\xi(k)}),\cdots, \frac{\partial^n_{k}}{n!} (c e^{\xi(k)}),
 d e^{\xi(k^*)}, \frac{\partial_{k^*}}{1!}(d e^{\xi(k^*)}),\cdots,\frac{\partial^n_{k^*}}{n!}(d e^{\xi(k^*)})\right)^T
 \end{equation}
 for the case of $\sigma_i=-1$.

\end{theorem}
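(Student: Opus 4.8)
The plan is to mirror the proof of Theorem~\ref{solution-nls2s} for the NLS(II) hierarchy, with the plain reflection $\varphi(\sigma x,-t)$ replaced throughout by the conjugated reflection $\varphi^*(\sigma_1 x,\sigma_2 t)$, and the algebraic conditions $AT+\sigma TA=0$, $T^2=\delta\sigma I$ replaced by their conjugated counterparts $AT+\sigma_1 TA^*=0$, $TT^*=\delta\sigma_1 I$ from \eqref{cmkdv-TA}. First I would establish that the constraint \eqref{constr-cmkdv} is compatible with \eqref{cmkdv-TA}. Starting from \eqref{akns-var-psi-A-c-d-odd}, the relation $AT+\sigma_1 TA^*=0$ gives $A=-\sigma_1 TA^*T^{-1}$, hence $A^{2j+1}=-\sigma_1 T(A^*)^{2j+1}T^{-1}$ because $\sigma_1^{2j+1}=\sigma_1$. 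Substituting $C^-=TC^{+*}$ into $\psi$ and conjugating the real times $t_{2j+1}$ collapses $\psi(x,t)$ to $T\exp\bigl(\tfrac{\sigma_1}{2}\sum_{j}(A^*)^{2j+1}t_{2j+1}\bigr)C^{+*}$; the hypothesis $\sigma_1\sigma_2=1$ (i.e. $\sigma_1=\sigma_2$) is exactly what makes this coincide with $T\varphi^*(\sigma_1 x,\sigma_2 t)$, so \eqref{constr-cmkdv} holds.

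Next I would insert the constraint into the double Wronskians \eqref{anks-solution-fgh} with $m=n$, writing every $\psi$-column as $T\partial_x^s\varphi^*(\sigma_1 x,\sigma_2 t)$, and then evaluate the reduction map on $f$ and $h$, i.e. compute $f^*(\sigma_1 x,\sigma_2 t)$ and $h^*(\sigma_1 x,\sigma_2 t)$. The key manipulations are the same four as in the NLS(II) proof: (i) conjugation sends $T\mapsto T^*$ and $\varphi^*\mapsto\varphi$; (ii) evaluating at $(\sigma_1 x,\sigma_2 t)$ uses $\sigma_i^2=1$ and introduces column factors $\sigma_1^s$ from $\partial_{\sigma_1 x}=\sigma_1\partial_x$; (iii) a swap of the two column blocks produces the signs $(-1)^{(n+1)^2}$ for $f$ and $(-1)^{n(n+2)}$ for $h$; and (iv) the identity $TT^*=\delta\sigma_1 I$, i.e. $(T^*)^{-1}=\delta\sigma_1 T$, restores the block $T$ after factoring out $|T^*|$. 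Collecting constants I expect
\[
f^*(\sigma_1 x,\sigma_2 t)=(-1)^{(n+1)^2}(\delta\sigma_1)^{n+1}|T^*|\,f(x,t),\quad
h^*(\sigma_1 x,\sigma_2 t)=(-1)^{n(n+2)}\sigma_1(\delta\sigma_1)^{n}|T^*|\,g(x,t),
\]
whence $\delta q^*(\sigma_1 x,\sigma_2 t)=\delta h^*/f^*=-g/f=r(x,t)$, which is precisely the reduction \eqref{cmkdv-hie-red}. Thus $q$ given by \eqref{cmkdv-q-solu} solves \eqref{cmkdv-hie}.

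Finally I would settle the explicit data. For each row of Table~\ref{Tab-4} I would substitute the block forms \eqref{nls2-example-tac} into \eqref{cmkdv-TA} and verify $AT+\sigma_1 TA^*=0$ and $TT^*=\delta\sigma_1 I$ blockwise; for the off-diagonal $T$ used in every row these reduce to $K_1=-\sigma_1 K_4^*$ (so $K_1=-K_4^*$ when $\sigma_i=1$ and $K_1=K_4^*$ when $\sigma_i=-1$) together with $T_2T_3^*=\delta\sigma_1 I$, which the tabulated choices satisfy. Since similar $A$ give the same $q$, it suffices to take $A$ in canonical form, and the vectors \eqref{cmkdv-varphi-psi}--\eqref{cmkdv-varphi-jord-1} then follow by evaluating $\varphi=\exp\bigl(\tfrac12\sum_j A^{2j+1}t_{2j+1}\bigr)C^+$ entrywise on the diagonal or Jordan block; the two cases $\sigma_i=\pm1$ differ only in whether the lower half carries $-k_l^*$ or $k_l^*$, as dictated by $K_1=-K_4^*$ versus $K_1=K_4^*$.

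I expect the main obstacle to be the bookkeeping in steps~(iii)--(iv): tracking the powers of $\sigma_1$, $\delta$ and $(-1)$ together with the determinant factor $|T^*|$ through the block swap and the substitution $(T^*)^{-1}=\delta\sigma_1 T$, and confirming they cancel so that the ratio is exactly $-g/f$ with no residual sign. The complex conjugation adds one layer beyond the NLS(II) computation, since one must also check that conjugating the real higher times leaves $\varphi^*$ in the required reflected form and that $|T^*|$ (rather than $|T|$) is what cancels consistently between numerator and denominator.
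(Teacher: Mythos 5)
Your proposal is correct and is essentially the paper's own approach: the paper explicitly skips the proof of this theorem, stating it is an analogue of the NLS(II) case (Theorem~\ref{solution-nls2s}), and your argument is precisely that analogue with conjugation carried through consistently. Your sign bookkeeping checks out --- $f^*(\sigma_1 x,\sigma_2 t)=(-1)^{(n+1)^2}(\delta\sigma_1)^{n+1}|T^*|f(x,t)$, $h^*(\sigma_1 x,\sigma_2 t)=(-1)^{n(n+2)}\sigma_1(\delta\sigma_1)^{n}|T^*|g(x,t)$, hence $\delta q^*(\sigma_1 x,\sigma_2 t)=-g/f=r$ --- as does the blockwise verification $K_1=-\sigma_1 K_4^*$, $T_2T_3^*=\delta\sigma_1 I$ of Table~\ref{Tab-4}.
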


\vskip 10pt
One-soliton solutions for the representative equation \eqref{cmkdv-eq} of the cmKdV type are respectively
\begin{align*}
& q_{\sigma_i=1,\delta=1}=\frac{(k+k^*)c^* d^*}{|c|^2 e^{kx + k^{3} t_3} - |d|^2 e^{-k^* x - k^{*3} t_3}},\\
& q_{\sigma_i=1,\delta=-1}= - \frac{(k+k^*)c^* d^*}{|c|^2 e^{kx + k^{3} t_3} + |d|^2 e^{-k^* x - k^{*3} t_3}},\\
& q_{\sigma_i=-1,\delta=1}=-\frac{(k-k^*)c^* d^*}{|c|^2 e^{kx + k^{3}t_3} + |d|^2 e^{k^*x + k^{*3} t_3}},\\
& q_{\sigma_i=-1,\delta=-1}=\frac{(k-k^*)c^* d^*}{|c|^2 e^{kx + k^{3}t_3} - |d|^2 e^{k^*x + k^{*3} t_3}},
\end{align*}
where we take $k=k_1$ and $C^+=(c,d)^T$.

\section{Solutions of local and nonlocal sine-Gordon}\label{sec-4}

In this section, we consider the first member of the negative order ANKS hierarchy (AKNS$(-1)$ for short),
its reductions to the sine-Gordon (sG) and nonlocal sG equation.

\subsection{AKNS$(-1)$: reductions}\label{sec-4-1}

Let us first go through the AKNS$(-1)$ system and its reductions.
The AKNS$(-1)$ system reads(cf.\cite{ZhaJZ-PD-2009})
\begin{equation}\label{akns-1-equation}
 q_{xt}-2q\partial^{-1}_x(qr)_t=q,~~r_{xt}-2r\partial^{-1}(qr)_t=r,
\end{equation}
where the integration operator $\partial^{-1}$ is defined as
\begin{equation}
\partial^{-1}_x\, \cdot =\frac{1}{2}(\int^{x}_{-\infty}-\int^{\infty}_{x})\,\cdot\, \mathrm{d}x.
\label{int-op}
\end{equation}
Since both $q$ and $r$ tend to zero as $|x|\to \infty$ and $qr$ is a conserved density,
one can alternatively write \eqref{akns-1-equation} as
\begin{equation}\label{akns-1-equation-sqr}
q_{xt}=2qs,~r_{xt}=2rs,~s_x=(qr)_t,
\end{equation}
where $s(x,t)$ is an auxiliary function satisfying
\begin{equation}
 s(x,t)=\partial^{-1}_x (qr)_t  + s_0,~~ (s_0= s(x,t)|_{|x|\to \infty}=-\frac{1}{2}).
\label{s-qr}
\end{equation}
Here we note that \eqref{akns-1-equation-sqr} is known as
the generalized coupled integrable dispersionless system \cite{konno-1996}.

One known reduction of \eqref{akns-1-equation-sqr}  is due to $r=-q$ and the result is \cite{konno-1994}
\begin{equation}\label{akns-1-equation-sq}
q_{xt}-2qs=0,~~s_x+2q q_t=0.
\end{equation}
This system connects with the sG equation $u_{xt}=\sin u$ through two ways.
One way is that via\cite{HirT-JPSJ-1994} $q=\frac{u_x}{2},~ s=\frac{1}{2}\cos u$ the system yields
\begin{align*}
& q_{xt}-2qs=\frac{1}{2}(u_{xt}-\sin u)_x=0,\\
& s_x+2q q_t=- \frac{1}{2} u_x (u_{xt}-\sin u)=0.
\end{align*}
In this sense we call \eqref{akns-1-equation-sq} a non-potention form of the sG equation.
The other way to sG equation $u_{xy}=\sin u$ is via a hodograph transformation (cf. \cite{KonO-JPSJ-1994}).
A second known reduction  of \eqref{akns-1-equation-sqr} is due to $r=-q^*$, giving a complex system \cite{Kon-AA-1995}
\begin{equation}\label{akns-1-equation-sqc}
q_{xt}-2qs=0,~~s_x+ (|q|^2)_t=0.
\end{equation}
Bilinear forms and dynamics of  solutions of  the AKNS$(-1)$ system \eqref{akns-1-equation} and its reduced
\eqref{akns-1-equation-sq} and \eqref{akns-1-equation-sqc} have been investigated in \cite{ZhaJZ-PD-2009}.
Recently, geometric formulation of \eqref{akns-1-equation-sqc} and its hodograph link with a complex short pulse equation
were given in \cite{FenMO-SAPM-2016}.

System \eqref{akns-1-equation-sqr} admits a nonlocal reduction \cite{ablowitz-study-2016}
\begin{equation}\label{sine-gordon-non-reduction}
 r(x,t)= - q(-x,-t),
\end{equation}
where $q,r\in \mathbb{R}[x,t]$.
Under the definition \eqref{s-qr} with the integration operator \eqref{int-op} and the above reduction \eqref{sine-gordon-non-reduction}, we find
\begin{align*}
s(-x,-t)=& \frac{1}{2}(\int^{-x}_{-\infty}-\int^{\infty}_{-x})(q(x',-t)q(-x',t))_t \mathrm{d}x' -\frac{1}{2}  \\
        =& -\frac{1}{2}(\int^{x}_{\infty}-\int^{-\infty}_{x})(q(-y,-t)q(y,t))_t \mathrm{d}y -\frac{1}{2}  \\
        =& -\frac{1}{2}(\int^{x}_{-\infty}-\int^{\infty}_{x})(q(-y,-t)q(y,t))_t \mathrm{d}y -\frac{1}{2},
\end{align*}
which yields a relation\footnote{Note that $\partial_x^{-1}=\int^{x}_{-\infty} \mathrm{d}x$ used in \cite{ablowitz-study-2016}
is not sufficient to get relation \eqref{ss1}.}
\begin{equation}
s(-x,-t)=s(x,t).
\label{ss1}
\end{equation}
Thus, with definition \eqref{s-qr}, system \eqref{akns-1-equation-sqr} yields
\begin{equation}\label{sine-gordon-nonlocal}
q_{xt}(x,t)-2q(x,t)s(x,t)=0,~~s_x(x,t)+[q(x,t)q(-x,-t)]_{t}=0,
\end{equation}
which is called real nonlocal sG equation.
In fact, if $q(x,t)$ is a solution to \eqref{sine-gordon-nonlocal}, so is $q(-x,-t)$, due to relation \eqref{ss1}.
A complex nonlocal reduction can be taken as
\begin{equation}\label{sine-gordon-non-reduction}
 r(x,t)= -q^*(-x,-t),
\end{equation}
where $q,r\in \mathbb{C}[x,t]$, which yields relation $s(-x,-t)=s^*(x,t)$ and a complex nonlocal sG equation
\begin{equation}\label{sine-gordon-nonlocal-c}
q_{xt}(x,t)-2q(x,t)s(x,t)=0,~~s_x(x,t)+[q(x,t)q^*(-x,-t)]_{t}=0.
\end{equation}

\subsection{AKNS$(-1)$: double Wronskian solutions}\label{sec-4-2}

Employing the rational transformation \eqref{akns-transformation},
AKNS$(-1)$ \eqref{akns-1-equation} can be written into a bilinear system\cite{ZhaJZ-PD-2009}
\begin{equation}
 D_x D_t h\cdot f= h f,~~D_x D_t g\cdot f= g f,~~D_{x}^2 f\cdot f= 2hg,
\end{equation}
which has double Wronskian solutions\cite{song2010}
\begin{equation}\label{akns--1-fgh}
 f=|\widehat{\varphi}^{(n)};\widehat{\psi}^{(m)}|,~g=  2|\widehat{\varphi}^{(n+1)};\widehat{\psi}^{(m-1)} |,
 ~h= 2 |\widehat{\varphi}^{(n-1)}; \widehat{\psi}^{(m+1)} |,
\end{equation}
where $\varphi$ and $\psi$ are  defined by
\begin{equation}\label{phipsi-anks--1}
 \varphi = \exp{ \Bigl( -Ax-\frac{1}{4}A^{-1}t \Bigr)}C^+,~~\psi = \exp{ \Bigl( Ax+\frac{1}{4}A^{-1}t \Bigr)}C^-,
\end{equation}
$A\in \mathbb{C}_{(n+m+2)\times(n+m+2)}$ and $C^{\pm}$ are $(n+m+2)$-th order onstant column vectors.

\subsection{Solutions to the reduced AKNS$(-1)$}

Solutions for the reduced AKNS$(-1)$ system can be discussed via a procedure similar to the NLS and mKdV cases.
In the following we skip details and directly list out results.

\begin{theorem}\label{T:sG-r}
For the AKNS$(-1)$ system \eqref{akns-1-equation} (or \eqref{akns-1-equation-sqr}), its real reduction
\begin{equation}
r(x,t)=-q(\sigma x,\sigma t),~~\sigma=\pm 1
\label{red-sg-r}
\end{equation}
yields the real nonpotential sG equation \eqref{akns-1-equation-sq} and its nonlocal version  \eqref{sine-gordon-nonlocal}.
Corresponding reductions on the double Wronskians \eqref{akns--1-fgh} with \eqref{phipsi-anks--1} can be implemented by
taking $m=n$ and
\begin{equation}
\psi(x,t)=T \varphi(\sigma x,\sigma t),~~ C^-=TC^{+},
\label{constr-sg-r}
\end{equation}
where $T$ and $A$ satisfy
\begin{equation}
AT+\sigma TA=0 ,~ T^2= -\sigma I.
\label{sg-r-TA}
\end{equation}
Under the above constraints, solutions to the real nonpotential sG equation \eqref{akns-1-equation-sq} and its nonlocal version  \eqref{sine-gordon-nonlocal}
are given as the form
\begin{equation}\label{sg-r-q-solu}
 q(x,t)= \frac{2|\widehat{\varphi}^{(n-1)}; \widehat{\psi}^{(n+1)}|}{|\widehat{\varphi}^{(n)}; \widehat{\psi}^{(n)}|}.
\end{equation}
When  $T$ and $A$ are block matrices \eqref{nls2-example-tac} where
$T_i$ and $A_i$ are $(n+1)\times (n+1)$ matrices,
 solutions to \eqref{sg-r-TA} are given as below, 
\begin{table}[H]
\begin{center}
\begin{tabular}{|c|c|c|}
\hline
   $\sigma$    & $T$ &  $A$    \\
\hline
   $1$              & $T_1=T_4=\mathbf{0}_{n+1}$, $T_3=-T_2 =\mathbf{I}_{n+1}$ & $ K_1=-K_4=\mathbf{K}_{n+1}$ \\
\hline
   $-1$              & $T_1=-T_4=-\mathbf{I}_{n+1}$, $T_3=T_2 =\mathbf{0}_{n+1}$ & $ K_1=-K_4=\mathbf{K}_{n+1}$ \\
\hline
\end{tabular}
\caption{$T$ and $A$ for the real nonpotential sG system}
\label{Tab-5}
\end{center}
\end{table}

In solution \eqref{sg-r-q-solu} the vector $\varphi$ can be chosen as the following.
When $\mathbf{K}_{n+1}$ is the diagonal matrix \eqref{K-n},
we have
\begin{equation}\label{sg-r-varphi-psi}
\varphi= (c_1 e^{\mu(k_1)}, c_2 e^{\mu(k_2)},\cdots,c_{n+1} e^{\mu(k_{n+1})},
 d_1 e^{\mu(-k_1)}, d_2 e^{\mu(-k_2)},\cdots,d_{n+1} e^{\mu(-k_{n+1})})^T,
\end{equation}
where
\begin{equation}\label{mu}
\mu(k_i)=-k_i x -\frac{t}{4k_i},~~ k_i\in \mathbb{R}.
\end{equation}
When  $\mathbf{K}_{n+1}$ is a $(n+1)\times (n+1)$ Jordan matrix w.r.t. $J_{n+1}(k)$ defined as in \eqref{Jordan},
we have
 \begin{equation}\label{sg-r-varphi-jord}
 \varphi= \left(c e^{\mu(k)}, \frac{\partial_{k}}{1!}(c e^{\mu(k)}),\cdots, \frac{\partial^n_{k}}{n!} (c e^{\mu(k)}),
 d e^{\mu(-k)}, \frac{\partial_{k}}{1!}(d e^{\mu(-k)}),\cdots,\frac{\partial^n_{k}}{n!}(d e^{\mu(-k)})\right)^T.
 \end{equation}
Note that for $\sigma=-1$ case, $C^{\pm}$ can be always gauged to $(1,1\cdots,1)^T$ without changing solution $q(x,t)$.

\end{theorem}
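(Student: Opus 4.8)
The plan is to follow the same three-stage strategy as in the proof of Theorem~\ref{solution-nls2s} for the NLS(II) hierarchy, now adapted to the negative-flow data \eqref{phipsi-anks--1} and to the sign convention $T^2=-\sigma I$ (which is just the $\delta=-1$ instance of the earlier $T^2=\delta\sigma I$, since the sG reduction \eqref{red-sg-r} has $\delta=-1$ built in). The fact that the reduction $r(x,t)=-q(\sigma x,\sigma t)$ actually produces the sG systems was already settled in Section~\ref{sec-4-1}; so the theorem's real content is that the Wronskian constraint \eqref{constr-sg-r} realizes that reduction and that \eqref{sg-r-q-solu} solves the reduced system.

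First I would check that the constraint \eqref{constr-sg-r} is compatible with \eqref{sg-r-TA}. From $AT+\sigma TA=0$ one gets $A^kT=T(-\sigma A)^k$ by induction; the one new point relative to the NLS(II) case is that $\psi$ now carries $A^{-1}$, so I must also record $A^{-1}T=-\sigma TA^{-1}$, which follows by multiplying $AT=-\sigma TA$ on the left and right by $A^{-1}$ (here $A$ is invertible, as the canonical forms of Table~\ref{Tab-5} make explicit). Substituting $C^-=TC^+$ into $\psi=\exp(Ax+\tfrac14A^{-1}t)C^-$ and pulling $T$ leftward through the exponential then gives $\psi(x,t)=T\exp(-\sigma Ax-\tfrac{\sigma}{4}A^{-1}t)C^+=T\varphi(\sigma x,\sigma t)$, as required.

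Next I would establish the reduction. Set $\Phi_k=\partial_x^k\varphi(x,t)$ and let $\tilde\Phi_k$ denote the first-slot $x$-derivatives taken before the substitution $x\mapsto\sigma x,\,t\mapsto\sigma t$. The chain rule then makes the $k$-th column of $\widehat\psi$ equal to $\sigma^kT\tilde\Phi_k$ at $(x,t)$, and, after the substitution and using $\sigma^2=1$, equal to $\sigma^kT\Phi_k$ at $(\sigma x,\sigma t)$. In each of $f,g,h$ I would pull the scalars $\sigma^k$ out of the $\psi$-columns, then (i) swap the two column blocks, (ii) factor $|T|$ out through $[\,T(\cdots)\mid(\cdots)\,]=T[\,(\cdots)\mid T^{-1}(\cdots)\,]$, and (iii) replace $T^{-1}$ by $-\sigma T$ using $T^2=-\sigma I$. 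I expect this to yield $f(\sigma x,\sigma t)=|T|\sigma^{n+1}f(x,t)$ and $h(\sigma x,\sigma t)=|T|\sigma^{n+1}g(x,t)$; dividing gives $q(\sigma x,\sigma t)=g(x,t)/f(x,t)=-r(x,t)$, which is \eqref{red-sg-r}. For $\sigma=1$ this is the local $r=-q$ producing \eqref{akns-1-equation-sq}, and for $\sigma=-1$ the nonlocal $r(x,t)=-q(-x,-t)$ producing \eqref{sine-gordon-nonlocal}; the solution \eqref{sg-r-q-solu} is then just $q=h/f$ read off \eqref{akns--1-fgh} with $m=n$.

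Finally I would exhibit explicit $T,A$. With the block ansatz \eqref{nls2-example-tac} and $A=\mathrm{Diag}(K_1,K_4)$, the operator equation $AT+\sigma TA=0$ splits into homogeneous Sylvester-type block identities (e.g. $K_1T_2+\sigma T_2K_4=0$), which together with $T^2=-\sigma I$ single out the two rows of Table~\ref{Tab-5}; these are verified by direct substitution, and $A$ may be taken in diagonal or Jordan canonical form since similar $A$ give the same $q,r$. Reading off the exponentials of the canonical $A$ then produces \eqref{sg-r-varphi-psi}--\eqref{sg-r-varphi-jord}, with $\mu(k_i)=-k_ix-t/(4k_i)$ coming straight from \eqref{phipsi-anks--1}, and the special block structure for $\sigma=-1$ lets $C^\pm$ be gauged to $(1,\dots,1)^T$. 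I expect the main obstacle to be the sign/parity bookkeeping in the determinant step: the blocks paired in $g$ and $h$ have unequal sizes $n+2$ and $n$, so the block-swap sign $(-1)^{n(n+2)}$, the chain-rule products $\sigma^{(n+1)(n+2)/2}$ and $\sigma^{n(n-1)/2}$, and the factor $(-\sigma)^{n}$ from $T^{-1}=-\sigma T$ must all be combined correctly; the point to verify is that, using $\sigma^2=1$ and the evenness of $n(n+3)$, they collapse to the very same prefactor $|T|\sigma^{n+1}$ that appears for $f$, so that it cancels in the ratio $h/f$.
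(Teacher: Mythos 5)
Your proposal is correct and follows essentially the same route as the paper: the paper itself proves this theorem only by remarking that "solutions for the reduced AKNS$(-1)$ system can be discussed via a procedure similar to the NLS and mKdV cases" and skipping details, and your proposal is precisely that adaptation of the Theorem~\ref{solution-nls2s} argument, with the one genuinely new ingredient (the relation $A^{-1}T=-\sigma TA^{-1}$ needed because the negative-flow dispersion \eqref{phipsi-anks--1} involves $A^{-1}$) handled correctly. Your sign bookkeeping also checks out: for $f$ the factors combine as $(-1)^{(n+1)^2}(-\sigma)^{n+1}\sigma^{n(n+1)}=\sigma^{n+1}$ and for $h$ as $(-1)^{n(n+2)}(-\sigma)^{n}\sigma^{(n+1)(n+2)/2+n(n-1)/2}=\sigma^{(n+1)^2}=\sigma^{n+1}$, so the common prefactor $|T|\sigma^{n+1}$ indeed cancels in $q=h/f$, yielding $r(x,t)=-q(\sigma x,\sigma t)$ as required.
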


\begin{theorem}\label{T:sG-c}
For the AKNS$(-1)$ system \eqref{akns-1-equation} (or \eqref{akns-1-equation-sqr}), its complex reduction
\begin{equation}
r(x,t)=-q^*(\sigma x,\sigma t),~~\sigma=\pm 1
\label{red-sg-r}
\end{equation}
yields complex nonpotential sG equation \eqref{akns-1-equation-sqc} and its nonlocal version  \eqref{sine-gordon-nonlocal-c}.
Corresponding reductions on the double Wronskians \eqref{akns--1-fgh} with \eqref{phipsi-anks--1} can be implemented by
taking $m=n$ and
\begin{equation}
\psi(x,t)=T \varphi^*(\sigma x,\sigma t),~~ C^-=TC^{+*},
\label{constr-sg-r}
\end{equation}
where $T$ and $A$ satisfy
\begin{equation}
AT+\sigma TA^*=0 ,~ TT^*=-\sigma I.
\label{sg-r-TA-com}
\end{equation}
Under the above constraints, solutions to  \eqref{akns-1-equation-sqc} and its nonlocal version  \eqref{sine-gordon-nonlocal-c}
are given as the form  \eqref{sg-r-q-solu}.
When  $T$ and $A$ are block matrices \eqref{nls2-example-tac} where
$T_i$ and $A_i$ are $(n+1)\times (n+1)$ matrices,
 solutions to \eqref{sg-r-TA-com} are given as below, 
\begin{table}[H]
\begin{center}
\begin{tabular}{|c|c|c|}
\hline
   $\sigma$    & $T$ &  $A$    \\
\hline
   $1$              & $T_1=T_4=\mathbf{0}_{n+1}$, $T_3=-T_2 =\mathbf{I}_{n+1}$ & $ K_1=-K_4^*=\mathbf{K}_{n+1}$ \\
\hline
   $-1$              & $T_1=T_4=\mathbf{0}_{n+1}$, $T_3=T_2 =-\mathbf{I}_{n+1}$ & $ K_1=K_4^*=\mathbf{K}_{n+1}$ \\
\hline
\end{tabular}
\caption{$T$ and $A$ for the complex nonpotential sG system}
\label{Tab-5}
\end{center}
\end{table}

In solution \eqref{sg-r-q-solu} the vector $\varphi$ can be chosen as following. When $\mathbf{K}_{n+1}$ is the diagonal matrix \eqref{K-n}, we have
\begin{equation}
\varphi= (c_1 e^{\mu(k_1)}, c_2 e^{\mu(k_2)},\cdots,c_{n+1} e^{\mu(k_{n+1})},
 d_1 e^{\mu(-k_1^*)}, d_2 e^{\mu(-k_2^*)},\cdots,d_{n+1} e^{\mu(-k_{n+1}^*)})^T
\end{equation}
for the case of $\sigma=1$, and
\begin{equation}
\varphi= (c_1 e^{\mu(k_1)}, c_2 e^{\mu(k_2)},\cdots,c_{n+1} e^{\mu(k_{n+1})},
 d_1 e^{\mu(k_1^*)}, d_2 e^{\mu(k_2^*)},\cdots,d_{n+1} e^{\mu(k_{n+1}^*)})^T
\end{equation}
for the case of $\sigma=-1$,
where $\mu(k_i)$ is defined by \eqref{mu}. When $\mathbf{K}_{n+1}$ is a $(n+1)\times (n+1)$ Jordan matrix w.r.t. $J_{n+1}(k)$ defined as in \eqref{Jordan},
we have
 \begin{equation*}
 \varphi= \left(c e^{\mu(k)}, \frac{\partial_{k}}{1!}(c e^{\mu(k)}),\cdots, \frac{\partial^n_{k}}{n!} (c e^{\mu(k)}),
 d e^{\mu(-k^*)}, \frac{\partial_{k^*}}{1!}(d e^{\mu(-k^*)}),\cdots,\frac{\partial^n_{k^*}}{n!}(d e^{\mu(-k^*)})\right)^T
 \end{equation*}
 for the case of $\sigma=1$, and
  \begin{equation*}
 \varphi= \left(c e^{\mu(k)}, \frac{\partial_{k}}{1!}(c e^{\mu(k)}),\cdots, \frac{\partial^n_{k}}{n!} (c e^{\mu(k)}),
 d e^{\mu(k^*)}, \frac{\partial_{k^*}}{1!}(d e^{\mu(k^*)}),\cdots,\frac{\partial^n_{k^*}}{n!}(d e^{\mu(k^*)})\right)^T
 \end{equation*}
 for the case of $\sigma=-1$.

\end{theorem}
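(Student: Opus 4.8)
The plan is to follow the same route as the proof of Theorem \ref{solution-nls2s} (the NLS(II) case) and the complex reduction of Theorem \ref{T:cmkdv}, since \eqref{red-sg-r} is exactly a complex, space-and-time reflecting reduction with the sign choice ``$\delta=-1$''; only the dispersion carried by $\varphi,\psi$ changes. Because the double Wronskian \eqref{akns--1-fgh} with \eqref{phipsi-anks--1} already solves the AKNS$(-1)$ system for arbitrary $A$ and $C^\pm$, it suffices to show that the constraints \eqref{constr-sg-r} and \eqref{sg-r-TA-com} force the solution $q=h/f$, $r=-g/f$ to obey $r(x,t)=-q^*(\sigma x,\sigma t)$. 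Once this identity holds on the solution, substituting it into \eqref{akns-1-equation-sqr} turns the $r$-equation into the $\sigma$-reflected conjugate of the $q$-equation and gives $s_x=-[q(x,t)q^*(\sigma x,\sigma t)]_t$; together with the relation $s(\sigma x,\sigma t)=s^*(x,t)$ recorded in Section \ref{sec-4-1}, this reproduces \eqref{akns-1-equation-sqc} for $\sigma=1$ and \eqref{sine-gordon-nonlocal-c} for $\sigma=-1$.

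First I would check that the constraint \eqref{constr-sg-r} is compatible with \eqref{sg-r-TA-com}. From $AT+\sigma TA^*=0$ one writes $A^*=-\sigma T^{-1}AT$, and using $\sigma^2=1$ this gives both $-\sigma A^*=T^{-1}AT$ and $-\tfrac14\sigma(A^*)^{-1}=T^{-1}(\tfrac14A^{-1})T$, so that
\begin{equation*}
T\varphi^*(\sigma x,\sigma t)=T\exp\!\Bigl(-\sigma A^* x-\tfrac14\sigma(A^*)^{-1}t\Bigr)C^{+*}=\exp\!\Bigl(Ax+\tfrac14A^{-1}t\Bigr)TC^{+*}=\psi(x,t),
\end{equation*}
where the last equality uses $C^-=TC^{+*}$. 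The key observation is that the $x$-coefficient $A$ and the $t$-coefficient $A^{-1}$ are conjugated by the \emph{same} similarity $T^{-1}(\cdot)T$, so the single condition $AT+\sigma TA^*=0$ controls both terms at once.

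The core step is the determinant bookkeeping. Under \eqref{constr-sg-r} with $m=n$, I would write $f,g,h$ from \eqref{akns--1-fgh} in the block form $|\widehat\varphi^{(\cdot)}(x,t)_{[x]};T\widehat{\varphi^*}^{(\cdot)}(\sigma x,\sigma t)_{[x]}|$ exactly as in \eqref{nls2-fgh}, and then compute $f^*(\sigma x,\sigma t)$ and $h^*(\sigma x,\sigma t)$. Replacing $x\to\sigma x$, $t\to\sigma t$ turns the $x$-derivatives into $\partial_{\sigma x}$, interchanges the two column blocks, splits off a determinant factor $|T^*|$, and leaves a $(T^*)^{-1}$ that is rewritten as $-\sigma T$ by means of $TT^*=-\sigma I$. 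Collecting the prefactors — a block-swap sign $(-1)^{(n+1)^2}$ for $f$ and $(-1)^{n(n+2)}$ for $h$, the derivative-rescaling factor which reduces to $1$ for $f$ and to $\sigma$ for $h$ (because $n(n+1)$ is even), and a factor $(-\sigma)^{n+1}$ (resp.\ $(-\sigma)^{n}$) from converting $(T^*)^{-1}$ back to $T$ — I expect to arrive at
\begin{equation*}
f^*(\sigma x,\sigma t)=(-1)^{(n+1)^2}(-\sigma)^{n+1}|T^*|\,f(x,t),\qquad h^*(\sigma x,\sigma t)=(-1)^{n(n+2)}\sigma(-\sigma)^{n}|T^*|\,g(x,t).
\end{equation*}
Since $n(n+2)-(n+1)^2=-1$ and $\sigma(-\sigma)^{n}/(-\sigma)^{n+1}=-1$, these combine to give $-q^*(\sigma x,\sigma t)=-h^*(\sigma x,\sigma t)/f^*(\sigma x,\sigma t)=-g(x,t)/f(x,t)=r(x,t)$, the powers of $-\sigma$ and the factor $|T^*|$ cancelling just as in the NLS(II) computation.

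The main obstacle is precisely this sign-and-power accounting: keeping track of the $\sigma^{j}$ picked up by the $j$-th differentiated column under $x\to\sigma x$, of the differing parities $(n+1)^2$ versus $n(n+2)$ produced by interchanging unequal-sized column blocks, and of how $|T^*|$ interacts with $(-\sigma)^{n+1}$ and $(-\sigma)^{n}$, so that the net ratio is exactly $1$. The remaining verifications are routine: one checks by direct block multiplication that the two rows of the accompanying table solve \eqref{sg-r-TA-com} — for $\sigma=1$ one has $T=\bigl(\begin{smallmatrix}\mathbf{0}&-\mathbf{I}\\ \mathbf{I}&\mathbf{0}\end{smallmatrix}\bigr)$ with $TT^*=-I$, and for $\sigma=-1$ one has $T=\bigl(\begin{smallmatrix}\mathbf{0}&-\mathbf{I}\\ -\mathbf{I}&\mathbf{0}\end{smallmatrix}\bigr)$ with $TT^*=I$, while the block-diagonal $A$ satisfies $AT+\sigma TA^*=0$ in each case — and the explicit forms of $\varphi$ then follow by diagonalising $\mathbf{K}_{n+1}$ or putting it in Jordan form as in \eqref{Jordan}.
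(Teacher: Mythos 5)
Your proposal is correct and follows exactly the route the paper itself prescribes: the paper skips the proof of Theorem \ref{T:sG-c}, stating only that it proceeds ``via a procedure similar to the NLS and mKdV cases,'' and your argument is precisely that NLS(II)-style proof (compatibility of \eqref{constr-sg-r} with \eqref{sg-r-TA-com}, the double-Wronskian sign bookkeeping giving $f^*(\sigma x,\sigma t)$ and $h^*(\sigma x,\sigma t)$ in terms of $f$ and $g$, then the ratio yielding $r(x,t)=-q^*(\sigma x,\sigma t)$) adapted to the dispersion \eqref{phipsi-anks--1}. All the sign and power computations, the observation that $A$ and $A^{-1}$ are conjugated by the same similarity so one condition controls both exponents, and the block-matrix verification of Table \ref{Tab-5} check out.
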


\vskip 10pt
One-soliton solutions for the systems \eqref{akns-1-equation-sq}, \eqref{akns-1-equation-sqc},
\eqref{sine-gordon-nonlocal} and \eqref{sine-gordon-nonlocal-c} are respectively
\begin{align}
& q_{\eqref{akns-1-equation-sq}}=\frac{4k c d}{c^2 e^{-2k x - \frac{1}{2k}t} + d^2 e^{2k x + \frac{1}{2k}t}},\label{q-89}\\
& q_{\eqref{akns-1-equation-sqc}}= \frac{2(k+k^*) c^* d^* }{|c|^2 e^{-2k x - \frac{1}{2k}t}   +  |d|^2 e^{2k^* x + \frac{1}{2k^*}t}}, \\
& q_{\eqref{sine-gordon-nonlocal}}=\frac{4k }{e^{-2k x - \frac{1}{2k}t} + e^{2k x + \frac{1}{2k}t}},\label{nxtsg}\\
& q_{\eqref{sine-gordon-nonlocal-c}}=\frac{2(k-k^*) c^* d^* }{|c|^2 e^{-2k x - \frac{1}{2k}t}  -|d|^2 e^{-2k^* x - \frac{1}{2k^*}t}},\label{q95}
\end{align}
where we take $k=k_1$ and $C^+=(c,d)^T$.

As an example we consider dynamics of nonlocal sG equation \eqref{sine-gordon-nonlocal}. Its 1SS given in \eqref{nxtsg}
describes a single soliton
\begin{equation*}
q(x,t)=2k \,\mathrm{sech} \Bigl(2k x + \frac{1}{2k}t\Bigr),
\end{equation*}
travelling with a fixed amplitude $2k$ and speed $\frac{1}{4k^2}$.
2SS of equation \eqref{sine-gordon-nonlocal} is written as
\begin{equation}
q(x,t)=\frac{G}{F},
\label{2ss-sg}
\end{equation}
where
\begin{align*}
G= & 8\, e^{\frac{(k_1 + k_2) (t + 4 k_1 k_2 x)}{ 2 k_1 k_2}} (k_1^2 - k_2^2) \Bigl[-k_2 \cosh\Bigl(\frac{t}{2 k_1} + 2 k_1 x\Bigr) +
    k_1 \cosh\Bigl(\frac{t}{2k_2} + 2 k_2 x\Bigr)\Bigr],\\
F=& (1 + e^{\frac{t}{k_1} + 4 k_1 x}) (1 + e^{\frac{t}{k_2} + 4 k_2 x})( k_1^2 + k_2^2)\\
  & +  2 k_1 k_2\Bigl(-1 +e^{\frac{t}{k_1} + 4 k_1 x}+  e^{\frac{t}{k_2} + 4 k_2 x}-
     4  e^{\frac{(k_1 + k_2) (t + 4 k_1 k_2 x)}{ 2 k_1 k_2}} -  e^{\frac{(k_1 + k_2) (t + 4 k_1 k_2 x)}{  k_1 k_2}}\Bigr).
\end{align*}
Solution corresponding to $2\times 2$ Jordan matrix case is
\begin{equation}
q(x,t)=\frac{8 k^2 [2 k \cosh(2 k x+\frac{t}{2 k}) + (t - 4 k^2 x) \sinh( 2 k x +\frac{t}{2 k})]}
{t^2 + 16 k^4 x^2 + k^2 (2 - 8 t x) + 2 k^2 \cosh(4 k x +\frac{t}{k})}.
\label{sg-jss}
\end{equation}

\begin{figure}[!h]
\centering \subfigure[]
{\label{fig-4-1} 
\includegraphics[width=2.4in]{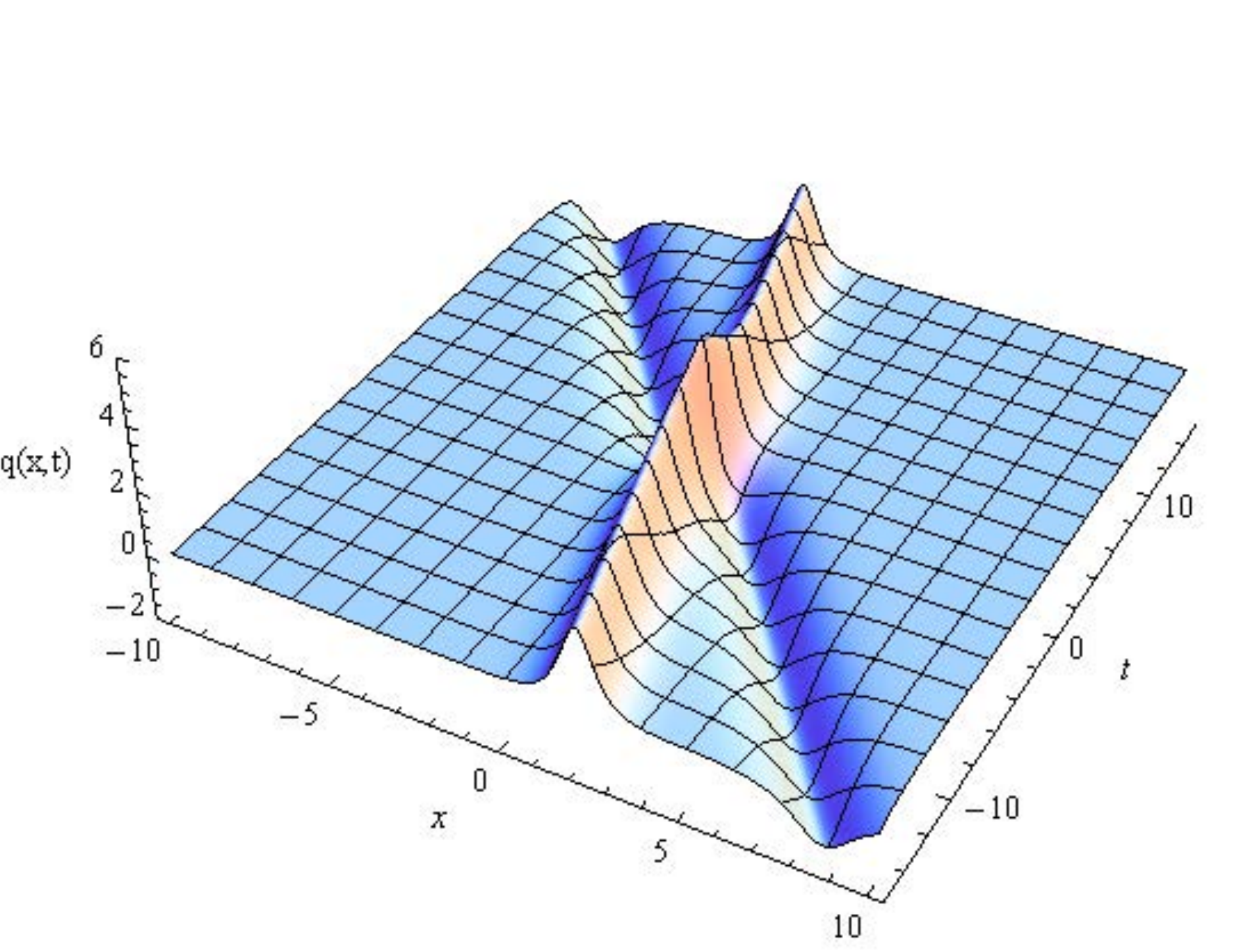}}~~~
\subfigure[]{
\label{fig-4-2} 
\includegraphics[width=2.4in]{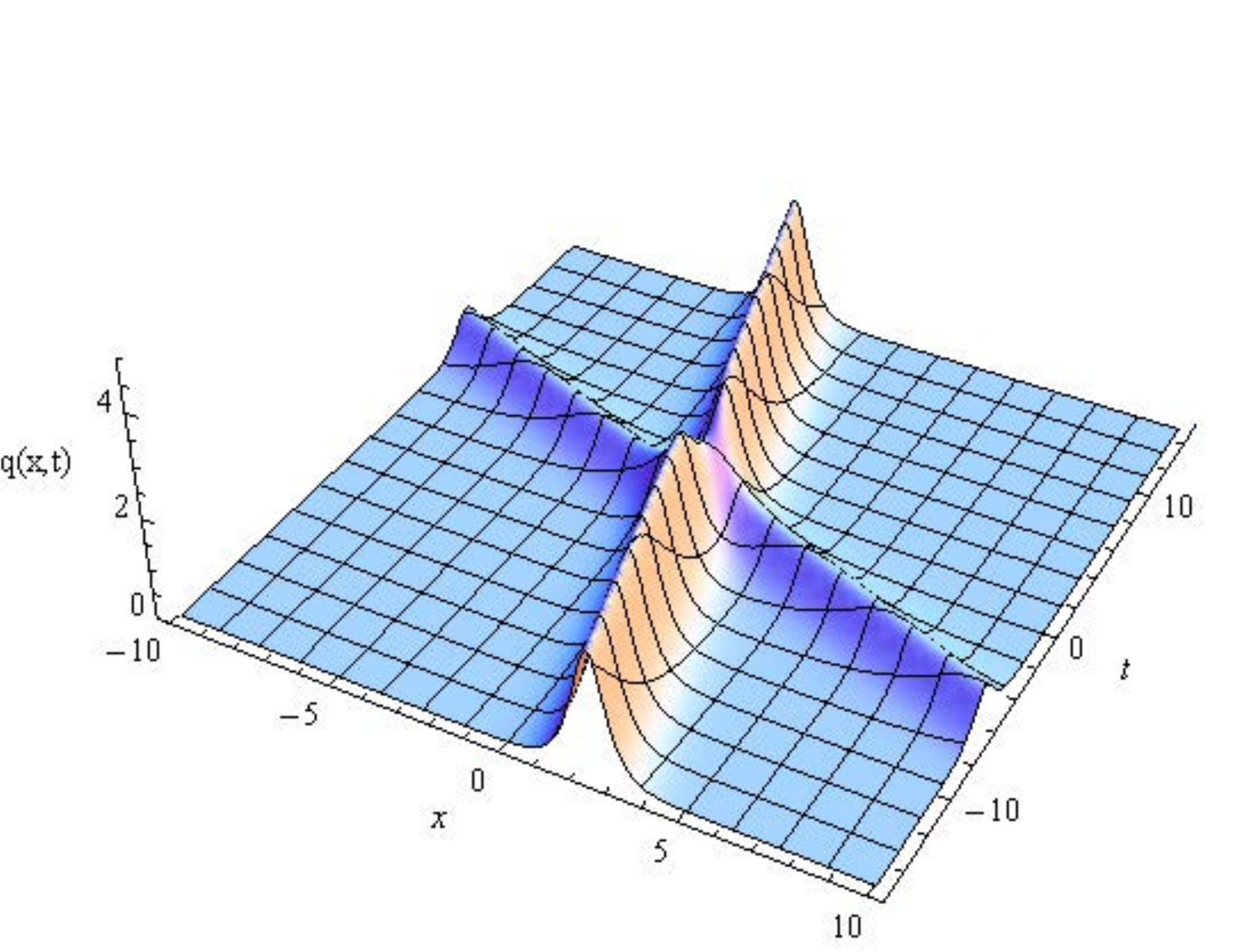}
}\\
\centering \subfigure[]
{\label{fig-4-3} 
\includegraphics[width=2.4in]{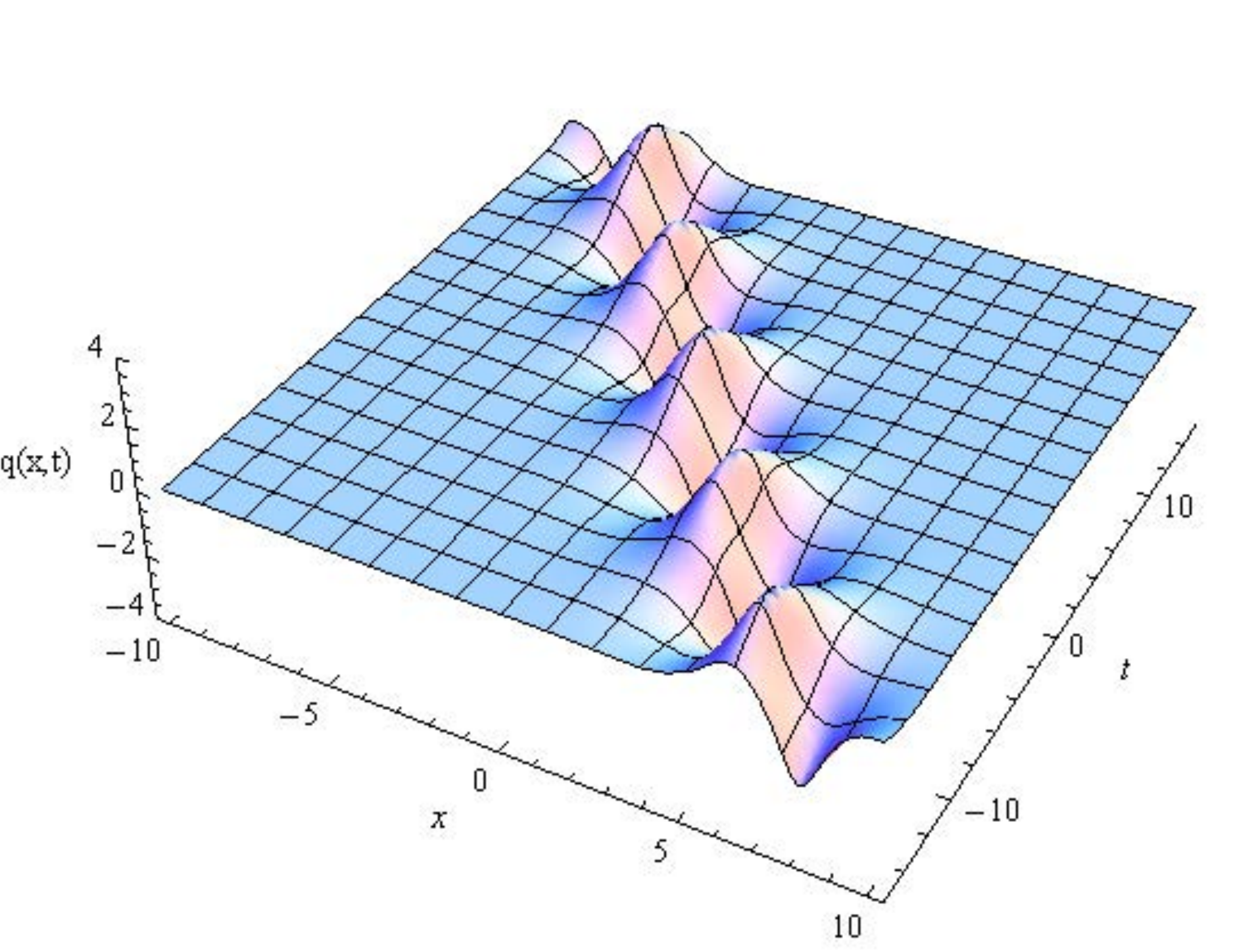}}~~~
\subfigure[]{
\label{fig-4-4} 
\includegraphics[width=2.4in]{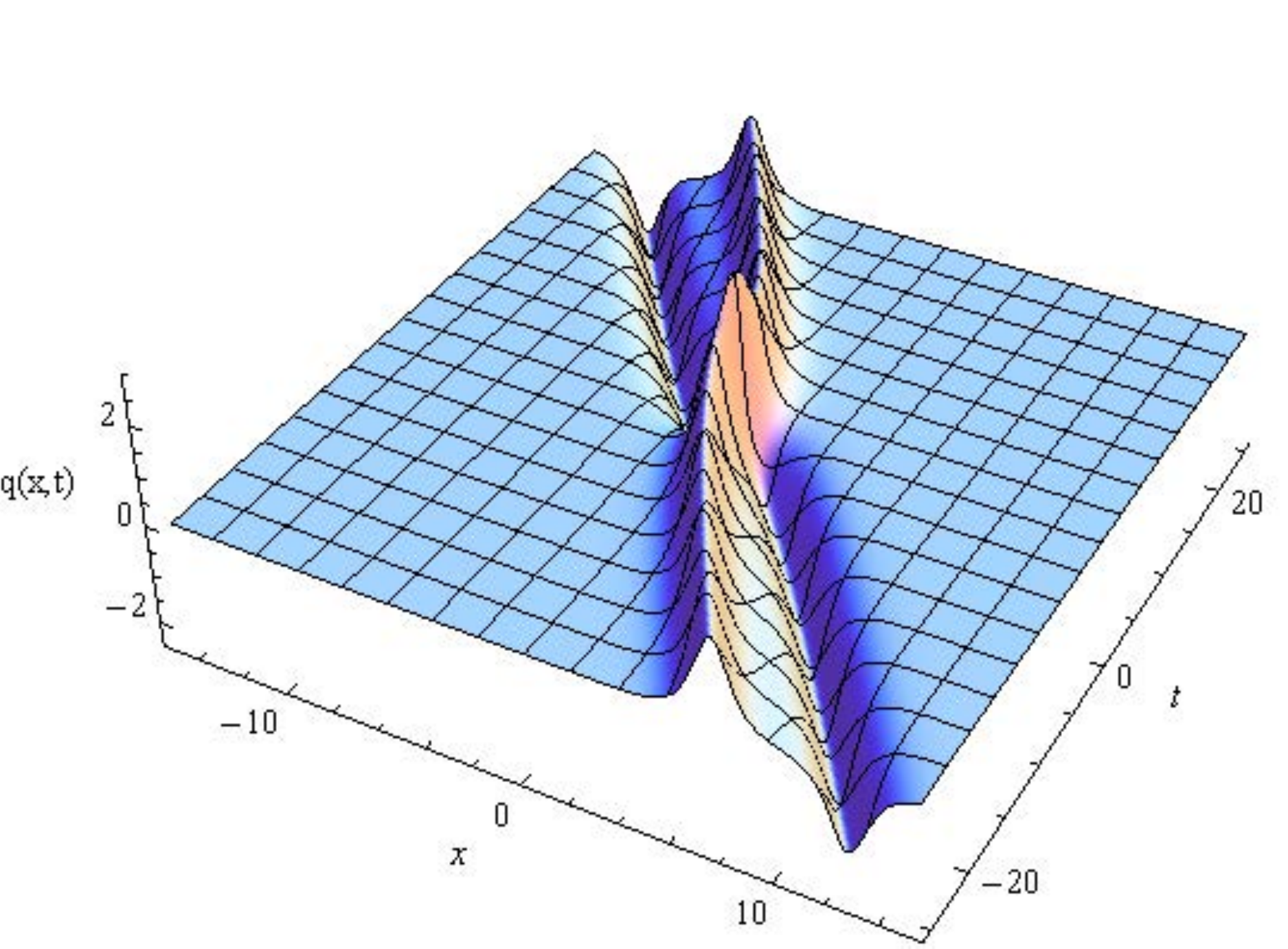}
}
\\
\caption{(a). Shape and motion of 2SS \eqref{2ss-sg} for equation \eqref{sine-gordon-nonlocal},
in which $k_1=0.7, k_2=1.2$.
(b). Shape and motion of 2SS \eqref{2ss-sg} for equation \eqref{sine-gordon-nonlocal}, in which $k_1=-0.4, k_2=1.2$.
(c). Breather provided by  \eqref{2ss-sg} for equation \eqref{sine-gordon-nonlocal}, in which $k_1=0.5+0.5i, k_2=0.5-0.5i$.
(d). Shape and motion of \eqref{sg-jss} for equation \eqref{sine-gordon-nonlocal}, in which $k=0.8$.
\label{Fig-4}}
\end{figure}

These solutions are illustrated in Fig.\ref{Fig-4}. Fig.\ref{fig-4-1} and Fig.\ref{fig-4-2} show normal interaction of two solitons,
Fig.\ref{fig-4-3} describes a breather which is generated when taking $k_1=k_2^*=a+ib$, where $a,b\in \mathbb{R}$ and $ab\neq 0$.
In Fig.\ref{Fig-4}(d) two solitons are completely in symmetric shape and their trajectories governed by logarithm  function with a linear background.
These dynamics are quite similar to the mKdV$^{+}$ equation (cf.\cite{ZhaZSZ-RMP-2014})
and asymptotic analysis can be implemented with standard prodecure.
We skip details which one may refer to \cite{ZhaZSZ-RMP-2014}
in which the mKdV$^{+}$ equation was analyzed.

\section{Conclusions an discussions}\label{sec-5}

In this paper we have shown a reduction technique that enables us to obtain solutions for
the reduced local and nonlocal equations from the double Wronskian solutions of the original
before-reduction systems.
In particular, we obtained solutions for the KdV hierarchy,
mKdV hierarchy, NLS hierarchy, sG equation (in nonpotential form)
as well as some nonlocal equations, such as the NLS hierarchy with reverse space,
with reverse time, with reverse space-time, and so forth.
As an example we illustrated two-soliton interactions of the reverse-time NLS equation \eqref{nls-2-eq} and \eqref{nls-2-eq-2}.
Its single soliton is always stationary, but the trajectories of two solitons are completely symmetric in $\{x,t\}$ plane
and for Eq.\eqref{nls-2-eq} their amplitudes are also related to phase parameters.
This is different from usual soliton systems.  Asymptotic analysis has been given as demonstration.
We also illustrated solutions of reverse-$(x,t)$ sG equation. They exhibit dynamics similar to the usual
mKdV$^{+}$ equation.

For each reduced equation,
the reduction provides a coefficient matrix $A$ (see \eqref{akns-var-psi-A-c-d} as an example)
with a constraint structure.
Since the eigenvalues of $A$ correspond to discrete spectrum of the AKNS
spectral problem  \eqref{akns-spectral},
the structures of $A$ for each reduced equation
indicate how the distribution of discrete spectrum changes with different reductions.
For example, Table \ref{Tab-2} indicates that, before reduction,
the  discrete spectrum data are
\begin{equation}
\{k_1, k_2, \cdots, k_{n+1}; k_{n+2}, k_{n+3}, \cdots, k_{2n+2}\},~~ k_j\in \mathbb{C},
\label{sp-dis}
\end{equation}
while reduction $r(x,t)=-q^*(x,t)$ provides constraints which force \eqref{sp-dis} to be
\[\{k_1, k_2, \cdots, k_{n+1}\} \cup \{ k_{n+1+j}= k^*_{j}, ~j=1,2,\cdots, n+1\},~~ k_j\in \mathbb{C},\]
and due to $r(x,t)=q^*(-x,t)$, \eqref{sp-dis} has to be
\[\{k_1, k_2, \cdots, k_{n+1}\} \cup \{ k_{n+1+j}=- k^*_{j}, ~j=1,2,\cdots, n+1\},~~ k_j\in \mathbb{C}.\]
The above correspondence, together with the fact that
line $(\sigma, \delta)=(1,-1)$ and $(\sigma, \delta)=(-1,1)$ share  same $T$, explains why 
by formally replacing $\{ k^*_{j}\}$ with  $\{-k^*_{j}\}$  the soliton solutions of the focusing  NLS equation
\begin{equation}
iq_t=q_{xx}+2 |q|^2q
\label{nls}
\end{equation}
yield solutions to the reverse space defocusing  NLS equation
\begin{equation}
iq_t(x,t)=q_{xx}(x,t)-2 q^2(x,t)q^*(-x,t).
\label{nls-d}
\end{equation}
Same trick holds for some other equations, e.g. the cmKdV case. 

The above $k^*_{j} \to -k^*_{j}$ trick also provides a realization for the formal transformations 
between local and nonlocal equations given in \cite{YanY-arxiv-2017}.
For example, the focusing NLS equation
\begin{equation}
i\hat{q}_{\hat t}(\hat x, \hat t)=\hat q_{\hat x \hat x}+2 |\hat q|^2\hat q
\label{nls-hat}
\end{equation}
and nonlocal defocusing NLS equation \eqref{nls-d} are connected by\footnote{Here we add a multiplier $i^s $ where $s\in \mathbb{Z}$,
which does not change equation \eqref{nls-hat}.}
\begin{equation}
x= i \hat x,~~ t=-\hat t,~~ q(x,t)=i^s \hat q(\hat x, \hat t).
\label{tr-yy}
\end{equation}
Note that it is required that after the transformation both $x$ and $\hat x$ must be considered to be real in 
the equations \eqref{nls-d} and \eqref{nls-hat},
which looks contradictory to the relation $x= i \hat x$ in \eqref{tr-yy}.
However, it is possible to switch real $x$ and real $\hat x$  on a suitable platform of solutions.
Let us take one-soloton solution \eqref{1ss-h} as an example to explain the mechanism.
With notation $(\hat x, \hat t)$  \eqref{1ss-h} reads
\begin{equation}
\hat q(\hat x, \hat t)=-\frac{(k+k^*)c^* d^*}{|c|^2 e^{k\hat x+ik^2 \hat t}+ |d|^2e^{-k^*\hat x + ik^{*2}  \hat t}}.
\label{1ss-hh}
\end{equation}
Since $k$ is arbitrary, we  replace $k$ with $ik$ and get
\begin{equation*}
\hat q(\hat x, \hat t)=-\frac{i(k-k^*)c^* d^*}{|c|^2 e^{i k\hat x - ik^2 \hat t}+ |d|^2e^{i k^*\hat x - ik^{*2}  \hat t}},
\end{equation*} 
which is still a solution of \eqref{nls-hat}.
Then we implement the replacement \eqref{tr-yy} with $s=-1$ and we reach
\[q(x,t)=-\frac{(k - k^*)c^* d^*}{|c|^2e^{kx+ik^2t}+|d|^2 e^{k^* x+ ik^{*2}  t}},
\] 
which is nothing but \eqref{1ss-ndf}, the one-soliton solution to the nonlocal defocusing NLS equation \eqref{nls-d}.
In practice, for multisoliton soltions in double Wronskian form one can always first replace $\{k_j\}$ by $\{ik_j\}$ in $\hat q(\hat x, \hat t)$, then 
make replacement \eqref{tr-yy}, and the resulting  $q(x,t)$ is a multisoliton soltion of Eq.\eqref{nls-d}.
Since after replacing $\{k_j\}$ by $\{ik_j\}$ one should keep $i\hat x$ unchanged 
in conjugation operation (as in this turn $i\hat x$ is to be replaced by real $x$),  
this procedure is equivalent to directly play  $k^*_{j} \to -k^*_{j}$ trick in the double Wronskian solution of Eq.\eqref{nls}.
With the above mechanism, all the transformations between local equations and their nonlocal counterparts 
given in \cite{YanY-arxiv-2017} are meaningful and practical.

To summarize, we have given a reduction approach to get explicit solutions of nonlocal equations from those of before-reduction system,
while \cite{YanY-arxiv-2017} gave formal transformations between nonlocal equations and local counterparts. 
There is a third approach developed in \cite{Cau-SAPM-2017} where the nonlocal NLS equation (e.g. \eqref{nls-d}) is considered as
a combination of two equations (on half line $x\geq 0$) that are derived from a matrix version of the AKNS spectral problem \eqref{akns-spectral}
in which $q$ and $r$ are replaced by $Q=\left(\begin{smallmatrix} q(x,t) & 0 \\0&  q(-x,t) \end{smallmatrix}\right)$
and  $R=\left(\begin{smallmatrix} r(x,t) & 0 \\0&  r(-x,t) \end{smallmatrix}\right)$,
imposed a local reduction $R=\mu Q^* \mu^{-1}$ where $\mu=\left(\begin{smallmatrix} 0 & 1 \\ 1 &  0 \end{smallmatrix}\right)$.
As a consequence, one can derive  multisolitons of the nonlocal case
from the multisolitons of the unreduced matrix AKNS hierarchy, applying local reductions.
All these approaches allow us to present solutions  of nonlocal equations by making use of known results.

\subsection*{Acknowledgments}
This work was supported by  the NSF of China [grant numbers 11371241,  11435005, 11631007].

\end{document}